\let\Algorithm\algorithm
\renewcommand\algorithm[1][]{\Algorithm[#1]\setstretch{0.8}}
\def\T{{\mathrm{\scriptstyle T}}}
\newcommand{\argmin}{\operatorname*{arg \ min}}
\theoremstyle{plain}
\newtheorem{theorem}{Theorem}
\newtheorem{assumption}{Assumption}
\newtheorem{lemma}{Lemma}
\newtheorem{corol}{Corollary}
\def\ps@myheadings{%
    \let\@oddfoot\@empty\let\@evenfoot\@empty
    \def\@evenhead{\thepage\hfil\slshape\leftmark}%
    \def\@oddhead{{\slshape\rightmark}\hfil\thepage}%
    \let\@mkboth\@gobbletwo
    \let\sectionmark\@gobble
    \let\subsectionmark\@gobble
    }
  \renewcommand\maketitle{\begin{titlepage}%
  \let\footnotesize\small
  \let\footnoterule\relax
  \let \footnote \thanks
  \null\vfil
  \vskip 60\p@
  \begin{center}%
    {\LARGE \@title \par}%
    \vskip 3em%
    {\large
     \lineskip .75em%
      \begin{tabular}[t]{c}%
        \@author
      \end{tabular}\par}%
      \vskip 1.5em%
    {\large \@date \par}
  \end{center}\par
  \@thanks
  \vfil\null
  \end{titlepage}%
  \setcounter{footnote}{0}%
}
\renewcommand\maketitle{\par
  \begingroup
    \renewcommand\thefootnote{\@fnsymbol\c@footnote}%
    \def\@makefnmark{\rlap{\@textsuperscript{\normalfont\@thefnmark}}}%
    \long\def\@makefntext##1{\parindent 1em\noindent
            \hb@xt@1.8em{%
                \hss\@textsuperscript{\normalfont\@thefnmark}}##1}%
    \if@twocolumn
      \ifnum \col@number=\@ne
        \@maketitle
      \else
        \twocolumn[\@maketitle]%
      \fi
    \else
      \newpage
      \global\@topnum\z@   
      \@maketitle
    \fi
    \thispagestyle{plain}\@thanks
  \endgroup
  \setcounter{footnote}{0}%
}
\title{Shrinking characteristics of precision matrix estimators}
\author{Aaron J. Molstad$^1$ \hspace{2pt} and Adam J. Rothman$^2$\\
Biostatistics Program, Fred Hutchinson Cancer Research Center$^1$\\
  School of Statistics, University of Minnesota$^2$\\
  \texttt{amolstad@fredhutch.org}$^1$ \hspace{5pt} \texttt{arothman@umn.edu}$^2$}
\date{}
\begin{document}

\maketitle
\begin{abstract}
We propose a framework to shrink a user-specified characteristic of a precision matrix estimator that is needed to fit a predictive model. 
Estimators in our framework minimize
the Gaussian negative log-likelihood plus an
$L_1$ penalty on a linear or affine function evaluated at the optimization variable corresponding to the
precision matrix. We establish convergence rate 
bounds for these estimators and propose an alternating direction method of 
multipliers algorithm for their computation. 
Our simulation studies show that our estimators 
can perform better than competitors when they are used to fit predictive models.  In particular, we illustrate cases where 
our precision matrix estimators perform worse at estimating the population precision matrix but better at prediction.
\end{abstract}

\section{Introduction} 
The estimation of precision matrices is required to fit many statistical models. 
  Many papers written in the last decade have proposed shrinkage estimators of 
  the precision matrix when the number of variables $p$ is large. 
  \citet{pourahmadi2013high} and \citet{fan2016overview} provide 
  comprehensive reviews of large covariance and precision matrix estimation. 
  The main strategy used in many of these papers is to minimize the Gaussian negative 
  log-likelihood plus a penalty on the off-diagonal entries of the optimization 
  variable corresponding to the precision matrix. For example, \citet{yuan2007model} 
  proposed the $L_1$-penalized Gaussian likelihood precision matrix estimator defined by
  \begin{equation}\label{graphical_lasso} \argmin_{\Omega \in \mathbb{S}^p_+} 
  \left\{ {\rm tr}(S\Omega) - \log {\rm det}(\Omega) + \lambda \sum_{i \neq j}|\Omega_{ij}|\right\},
  \end{equation}
  where $S$ is the sample covariance matrix, $\lambda > 0$ is a tuning parameter, 
  $\mathbb{S}^p_+$ is the set of $p \times p$ symmetric and positive definite matrices, 
  $\Omega_{ij}$ is the $(i,j)$th entry of $\Omega$, 
  and ${\rm tr}$ and ${\rm det}$ denote the trace and determinant, respectively. 
  Other authors have replaced the $L_1$ penalty in \eqref{graphical_lasso} with the squared Frobenius norm 
  \citep{witten2009covariance,rothman2014existence} or non-convex
  penalties that also encourage zeros in the estimator
  \citep{lam2009sparsistency,fan2009network}. 

  To fit many predictive models, only a characteristic of the population precision matrix need be estimated.  For example, in binary linear discriminant analysis, the 
  population precision matrix is needed for prediction only through the product of the precision matrix and
   the difference 
  between the two conditional distribution mean vectors. Many authors have 
  proposed methods that directly estimate this characteristic
  \citep{cai2011direct,fan2012road, mai2012direct}.

We propose to estimate the precision matrix by shrinking the characteristic of the estimator that is needed for prediction. 
The characteristic we consider is a linear or affine function evaluated at 
  the precision matrix. The goal is to improve prediction performance. 
    Unlike methods that estimate the characteristic directly, our approach provides the practitioner with an estimate of the entire precision matrix, not just the characteristic. 
  In our simulation studies and data example, 
  we show that penalizing the characteristic needed for prediction can improve prediction 
  performance over competing sparse precision estimators like \eqref{graphical_lasso}, even 
  when the true precision matrix is very sparse. 
  In addition, estimators in our framework can be used in applications other than linear discriminant analysis. 

\vspace{-10pt}
  \section{Proposed method}\label{proposed}
  We propose to estimate the population precision matrix $\Omega_*$ with
  \begin{equation} \label{eq:estimator}
  \vspace{-3pt}
  \hat{\Omega} = \argmin_{\Omega \in \mathbb{S}_+^{p}}\left\{  {\rm tr}(S\Omega) 
  - \log {\rm det}(\Omega) +  \lambda |A \Omega B  - C |_1 \right\}, 
  \end{equation}
    \vspace{-3pt}
  where $A \in \mathbb{R}^{a \times p}$, $B \in \mathbb{R}^{p \times b}$, and $C \in \mathbb{R}^{a \times b}$  
  are user-specified matrices, and $|M|_1 = \sum_{i,j} |M_{ij}|$. Our estimator 
  exploits the assumption that $A \Omega_* B - C$ is sparse. 
 When $A$, $B$ and $C$ need to be estimated, we replace them in \eqref{eq:estimator} with their estimators. The matrix $C$ can serve as a shrinkage target for $A \hat{\Omega}B$, allowing practitioners to incorporate prior information.

\citet{dalal2014g} proposed a class of estimators similar to \eqref{eq:estimator} which replaces $A \Omega B - C$ with $T(\Omega)$, where $T$ is a symmetric linear transform. 

  Fitting the discriminant analysis model requires the estimation of one or more precision matrices. 
  In particular, the linear discriminant analysis model assumes 
  that the data are independent copies of the random pair $(X,Y)$, 
  where the support of $Y$ is $\left\{1, \dots, J\right\}$ and 
  \begin{equation}\label{Normal_Model_LDA} X \mid Y= j \sim {\rm N}_p
  \left(\mu_{*j}, \Omega_*^{-1}\right),\quad j=1, \dots, J,\end{equation}
  where $\mu_{*j} \in \mathbb{R}^p$ and $\Omega_*^{-1} \in \mathbb{S}^p_+$ are unknown. 
  To discriminate between response categories $l$ and $m$, only 
  the characteristic $\Omega_*(\mu_{*l} - \mu_{*m})$ is needed. 
  Methods that estimate this characteristic directly have been proposed \citep{cai2011direct,mai2012direct,fan2012road,mai2015multiclass}.
  These methods are useful in high dimensions because they perform variable selection. 
  For the $j$th variable to be non-informative for discriminating 
  between response categories $l$ and $m$, it must be that the $j$th 
    element of $\Omega_*(\mu_{*l} - \mu_{*m})$ is zero.  
  While these methods can perform well in classification and variable selection, they do not 
  actually fit the model in \eqref{Normal_Model_LDA}. 

  Methods for fitting \eqref{Normal_Model_LDA} specifically for 
  linear discriminant analysis either assume $\Omega_*$ is diagonal 
  \citep{bickel2004some} or that both $\mu_{*l} - \mu_{*m}$ and $\Omega_*$ are sparse \citep{guo2010simultaneous,xu2015covariance}. A method for fitting 
  \eqref{Normal_Model_LDA} and performing variable selection was proposed by 
  \citet{witten2009covariance}. They suggest a two-step procedure where one 
  first estimates $\Omega_*$, and then with the estimate $\bar{\Omega}$ fixed, 
  estimates each $\mu_{*j}$ by penalizing the characteristic $\bar{\Omega} \mu_j$, where $\mu_j$ is the optimization variable corresponding to $\mu_{*j}.$

  To apply our method to the linear discriminant analysis problem, we use \eqref{eq:estimator} with $A = I_p$, $C$ equal to the matrix of zeros, and $B$ 
  equal to the matrix whose columns are $\bar{x}_j - \bar{x}_{k}$ for all 
  $1 \leq j < k \leq J$, where $\bar{x}_j$ is the unpenalized maximum likelihood estimator of $\mu_{*j}$. For large values of the tuning parameter, this 
  would lead to an estimator of $\Omega_*$ such that $\hat{\Omega}(\bar{x}_j -\bar{x}_{k})$ is sparse. Thus our approach simultaneously fits \eqref{Normal_Model_LDA} and performs variable selection.

  Precision and covariance matrix estimators are also needed for 
  portfolio allocation. The optimal allocation based on the \citet{markowitz1952portfolio} 
  minimum-variance portfolio is proportional to $\Omega_*\mu_*$, where $\mu_*$ is the vector 
  of expected returns for $p$ assets and $\Omega_*$ is precision matrix for the returns. 
  In practice, one would estimate $\Omega_*$ and $\mu_*$  with their usual sample estimators 
  $\hat\Omega$ and $\hat{\mu}$. However, when $p$ is larger than the sample size, the usual sample estimator of 
  $\Omega_*$ does not exist, so regularization is necessary.  Moreover, \citet{brodie2009sparse} 
  argue that sparse portfolios, i.e., portfolios with fewer than $p$ active positions, are often desirable when $p$ is large. 
  While many have proposed 
  using sparse or shrinkage estimators of $\Omega_*$ or $\Omega_*^{-1}$ inserted in the 
  Markowitz criterion, e.g., \citet{xue2012positive}, this would not necessarily lead to 
  sparse estimators of $\Omega_* \mu_*$. 
  To achieve a sparse portfolio, \citet{chen2016regularized} proposed a method for estimating the characteristic 
  $\Omega_*\mu_*$ directly, but like the direct linear discriminant methods, 
  it does not lead to an estimate of $\Omega_*$. 
  For the sparse portfolio allocation application, we propose to estimate $\Omega_*$ using \eqref{eq:estimator} 
  with $A=I_p$, $C$ equal to the vector of zeros, and $B = \hat{\mu}$. 
    
  Another application is in linear regression where the $q$-variate response and $p$-variate predictors have a joint multivariate normal distribution. In this case, the regression coefficient matrix is $\Omega_* \Sigma_{*XY}$, where $\Omega_*$ is the marginal precision matrix for the predictors and $\Sigma_{*XY}$ is the cross-covariance matrix between predictors and response. We propose to estimate $\Omega_*$ using 
\eqref{eq:estimator} with $A = I_p$, $C$ equal to the matrix of zeros, and $B$ equal to 
the usual sample estimator of $\Sigma_{*XY}.$ Similar to the proposal of \citet{witten2009covariance}, this approach provides an alternative method for estimating regression coefficients using shrinkage estimators of the marginal precision matrix for the predictors. 

There are also applications where neither $A$ nor $B$ is equal to $I_p$. For example, in quadratic discriminant analysis, if one assumes that $\mu_{*j}^\T \Omega_{*j} \mu_{*j}$ is small, e.g., $\mu_{*j}$ is in the span of a set of eigenvectors of $\Omega_{*j}$ corresponding to small eigenvalues, then it may be appropriate to shrink entries of the estimates of $\Omega_{j*}$, $\Omega_{*j}\mu_{*j}$, and $\mu_{*j}^\T \Omega_{*j} \mu_{*j}$. For this application, we propose to estimate $\Omega_*$ using \eqref{eq:estimator} with $C$ equal to the matrix of zeros and $A^\T = B = (\bar{x}_j, \gamma I_p)$, for some tuning parameter $\gamma > 0$. 

  \section{Computation}\label{sec:computation}
  \subsection{Alternating direction method of multipliers algorithm}
  To solve the optimization in \eqref{eq:estimator}, we propose an 
  alternating direction method of multipliers algorithm with a modification 
  based on the majorize-minimize principle \citep{lange2016mm}. 
  Following the standard alternating direction method of multipliers approach 
  \citep{boyd2011distributed}, we rewrite \eqref{eq:estimator} as a constrained 
  optimization problem:
  \begin{equation} \label{eq: constrained_primal}
  \argmin_{\left(\Theta, \Omega\right) \in \mathbb{R}^{a \times b} \times \mathbb{S}^p_+}    \left\{ {\rm tr}(S\Omega) - \log{\rm det}(\Omega) +  \lambda |\Theta|_1 \right\} \quad \text{subject to } A\Omega B -  \Theta = C.  
  \end{equation}
  The augmented Lagrangian for \eqref{eq: constrained_primal} is defined by
  \begin{align*}
  \mathcal{F}_\rho(\Omega, \Theta, \Gamma) = {\rm tr}(S \Omega) 
  & - \log{\rm det}(\Omega) +  \lambda |\Theta|_1 \\
  & -  
  {\rm tr} \left\{ \Gamma^\T(A\Omega B - \Theta - C)\right\} + \frac{\rho}{2}\|A\Omega B - \Theta - C\|_F^2, 
  \end{align*} 
  where $\rho > 0$, $\Gamma \in \mathbb{R}^{a \times b}$ 
  is the Lagrangian dual variable, and $\|\cdot\|_F$ is the Frobenius norm.
 Let the subscript $k$ denote the $k$th iterate. 
  From \citet{boyd2011distributed}, to solve \eqref{eq: constrained_primal}, the alternating direction method of 
    multipliers algorithm uses the updating equations
  \begin{align}
  \Omega_{k+1} &= \argmin_{\Omega \in \mathbb{S}^p_+} 
  \mathcal{F}_{\rho}(\Omega, \Theta_{k}, \Gamma_{k}), \label{M_update_original} \\
  \Theta_{k+1} &= \argmin_{\Theta \in \mathbb{R}^{a \times b}}  
  \mathcal{F}_{\rho}(\Omega_{k+1}, \Theta, \Gamma_{k}), \label{soft_thresholding_step}\\
  \Gamma_{k+1} &= \Gamma_{k} -  \rho \left(A \Omega_{k+1}B - \Theta_{k+1} - C \right). \label{dual_variable_step}
  \end{align}
  The update in \eqref{M_update_original} requires its own iterative algorithm, which is complicated by the positive definiteness of the optimization variable.
   To avoid this computation, we replace \eqref{M_update_original} with an approximation based on the majorize-minimize principle. In particular, we replace $\mathcal{F}_\rho(\cdot, \Theta_k, \Gamma_k)$ with a majorizing function at the current iterate $\Omega_{k}$:
  \begin{equation}\Omega_{k+1} = \argmin_{\Omega \in \mathbb{S}^+_p} 
  \left\{\mathcal{F}_\rho(\Omega, \Theta_{k}, \Gamma_{k}) + \frac{\rho}{2}{\rm vec}(\Omega - \Omega_{k})^\T 
  Q {\rm vec}(\Omega - \Omega_{k})\right\},\label{majorizer}
  \end{equation}
  where
  $ Q =   \tau I - \left(A^\T A \otimes BB^\T\right),$
    $\tau$ is selected so that $Q \in \mathbb{S}^p_+$, $\otimes$ is the Kronecker product, and ${\rm vec}$ forms a vector by stacking the columns of its matrix argument. 
  Since $$ {\rm vec}(\Omega - \Omega_{k})^\T\left(A^\T A  \otimes BB^\T\right) 
  {\rm vec}(\Omega - \Omega_{k}) = {\rm tr}\left\{ A^\T A (\Omega - \Omega_{k}) BB^\T
   (\Omega - \Omega_{k})\right\},$$  we can rewrite \eqref{majorizer} as 
 $$
  \Omega_{k+1} = \argmin_{\Omega \in \mathbb{S}^+_p} \left[ \mathcal{F}_\rho(\Omega, \Theta_{k}, \Gamma_{k})  
  + \frac{\rho\tau}{2} \|\Omega - \Omega_{k}\|_F^2  - \frac{\rho}{2} {\rm tr}\left\{ A^\T A
   (\Omega - \Omega_{k}) BB^\T (\Omega - \Omega_{k}) \right\} \right],$$
  which is equivalent to
  \begin{equation}\label{eq:majorized_LINADMM}   
  \Omega_{k+1} = \argmin_{\Omega \in \mathbb{S}^+_p}   \left[ {\rm tr}\left\{ \left(S + G_{k} \right) \Omega\right\}
   - \log {\rm det}(\Omega) + \frac{\rho\tau}{2} \|\Omega - \Omega_{k}\|_F^2 \right], 
   \end{equation}
  where $G_{k} = \rho A^\T(A\Omega_{k} B-\rho^{-1}\Gamma_{k}- \Theta_{k} -  C)B^\T$. 
  The zero gradient equation for \eqref{eq:majorized_LINADMM} is 
  \begin{equation}\label{gradient_equation} S - \Omega_{k+1}^{-1} + 
  \frac{1}{2} \left( G_{k} +  G_{k}^\T \right) + \rho \tau \left(\Omega_{k+1} - \Omega_{k}\right)=0,
  \end{equation}
  whose solution is \citep{witten2009covariance,price2015ridge}
  $$\Omega_{k+1} = \frac{1}{2\rho\tau} U \left\{ -\Psi + \left(\Psi^2 + 4 \rho \tau I_p\right)^{1/2}\right\} U^\T,$$
  where $U \Psi U^\T$ is the eigendecomposition of $S + (G_{k} + G_{k}^\T)/2 - \rho \tau \Omega_{k}.$ 
  Our majorize-minimize approximation is a special case of the prox-linear
  alternating direction method of multiplier algorithm \citep{chen1994proximal,deng2016global}. Using the majorize-minimize approximation of \eqref{M_update_original} guarantees that $\mathcal{F}_\rho(\Omega_{k+1}, \Theta_k, \Gamma_k) \leq \mathcal{F}_\rho(\Omega_{k}, \Theta_k, \Gamma_k)$ and maintains the convergence properties of our algorithm \citep{deng2016global}.

  Finally, \eqref{soft_thresholding_step}  also has a closed form solution: 
  \begin{equation}\label{soft_update} 
  \Theta_{k+1} = {\rm soft}\left( A\Omega_{k+1}B- \rho^{-1}\Gamma_{k} - C,\rho^{-1}\lambda \right) \notag,
  \end{equation}
  where ${\rm soft}(x, \phi) = \max \left(|x|- \phi, 0\right) {\rm sign}(x)$. To summarize, we solve \eqref{eq:estimator} with the following algorithm. 


 \begin{algorithm}
  \caption{Alternating direction method of multipliers algorithm for \eqref{eq:estimator} } \label{ADMM_CharShrink}
  \noindent Initialize $\Omega_{(0)} \in \mathbb{S}_+^p$, $\Theta_{(0)}\in \mathbb{R}^{a \times b}$, 
  $\rho > 0$, and $\tau$ such that $Q$ is positive definite. Set $k=0.$ 
  Repeat Step 1 - 6 until convergence:
  \begin{tabbing}
  \textit{Step 1.} Compute $G_{k}= \rho A^\T( A\Omega_{k} B-\rho^{-1}\Gamma_k - \Theta_{k} - C)B^\T$; \\
  \textit{Step 2.} Decompose $S + 2^{-1} (G_{k} + G_{k}^\T) - \rho \tau\Omega_{k} = U \Psi U^\T$ 
  where $U$ is orthogonal and $\Psi$ is diagonal; \\
  \textit{Step 3.} Set $\Omega_{k+1} = (2\rho\tau)^{-1} U \left\{ -\Psi + (\Psi^2 + 4 \rho \tau I_p)^{1/2}\right\} U^\T$;\\
  \textit{Step 4.} Set $\Theta_{k+1} =  {\rm soft}(A\Omega_{k+1}B- \rho^{-1}\Gamma_{k} - C, \rho^{-1}\lambda )$;\\
  \textit{Step 5.} Set $\Gamma_{k+1} = \Gamma_{k} - \rho\left(A\Omega_{k+1} B - \Theta_{k+1} - C\right)$;\\
 \textit{Step 6.} Replace $k$ with  $k+1.$
\end{tabbing}
  \end{algorithm}

  \subsection{Convergence and implementation}\label{subsec:convergence}
  Using the same proof technique as in \citet{deng2016global}, one can show that 
  the iterates from Algorithm~\ref{ADMM_CharShrink} converge to their optimal values when a solution 
  to \eqref{eq: constrained_primal} exists. 

  In our implementation, we set $\tau = \varphi_{1}(A^\T A)\varphi_{1}(BB^\T) + 10^{-8}$, where $\varphi_{1}(\cdot)$ denotes the largest eigenvalue of its argument. 
  This computation is only needed once at the initialization of our algorithm. 
  We expect that in practice, the computational complexity of our algorithm will be 
  dominated by the eigendecomposition in Step 2, which requires $O(p^3)$ flops.

  To select the 
  tuning parameter to use in practice, we recommend using some type of cross-validation 
  procedure. For example, in the linear discriminant analysis 
  case, one could select the tuning parameter that minimizes the validation misclassification 
  rate or maximizes a validation likelihood.

  \section{Statistical Properties}\label{sec:stat_theory}
  We now show that by using the penalty in \eqref{eq:estimator}, we can estimate 
  $\Omega_*$ and $A \Omega_* B$ consistently in the Frobenius and $L_1$ norms, respectively. 
We focus on the case where $C$ is the $a \times b$ matrix of zeros. A nonzero $C$ substantially complicates the theoretical analysis and is left as a direction for future research. 

  Our results rely on assuming that  $A \Omega_* B$ is sparse. Define the set $\mathcal{G}$ as 
  the indices of $A\Omega_* B$ that are nonzero, i.e., 
  $$ \mathcal{G} = \left\{(i,j)\in \left\{1, \dots, a\right\} \times \left\{1 ,\dots, b\right\}:
   \left[ A \Omega_* B\right]_{ij} \neq 0 \right\}.$$ 
   Let the notation $[A \Omega_* B]_\mathcal{G} \in \mathbb{R}^{a \times b}$ denote the matrix 
   whose $(i,j)$th entry is equal to the $(i,j)$th of $A \Omega_* B$ if $(i,j) \in \mathcal{G}$ 
   and is equal to zero if $(i,j) \notin \mathcal{G}$. We generalize our results to the case that 
   $A$ and $B$ are unknown, and we use plug-in estimators of them in \eqref{eq:estimator}. 

  We first establish convergence rate bounds for known $A$ and $B$.
  Let $\sigma_{k}(\cdot)$ and $\varphi_{k}(\cdot)$ denote the $k$th largest singular value and eigenvalue of their arguments respectively. 
  Suppose that the sample covariance matrix used in \eqref{eq:estimator} is
  $S_n = n^{-1} \sum_{i=1}^n X_i X_i^\T,$ where $X_1, \dots, X_n$ are 
  independent and identically distributed $p_n$-dimensional random vectors with mean zero and covariance matrix $\Omega_*^{-1}$. 
  We will make the following assumptions:

  \begin{assumption}\label{A1s} For all $n$, there exists a constant $k_1$ 
  such that $$0 < k_1^{-1} \leq \varphi_{p_n}(\Omega_*) \leq \varphi_{1}(\Omega_*) \leq k_1 < \infty.$$ 
  \end{assumption} 

 \begin{assumption}\label{A2} For all $n$, there exists a constant $k_2$ 
  such that $\min \left\{ \sigma_{p_n}(A), \sigma_{p_n}(B) \right\} \geq k_2 > 0$. 
  \end{assumption}

  \begin{assumption}\label{A3} For all $n$, there exist positive constants $k_3$ and $k_4$ 
  such that $$\max_{j \in \left\{ 1, \dots, p_n \right\} } 
  E\left\{ {\rm exp}(tX_{1j}^2)\right\} \leq k_3 < \infty, \quad t \in  (-k_4, k_4).$$ 
  \end{assumption}
  Assumptions \ref{A1s} and \ref{A3} are common in the regularized precision matrix estimation literature; 
  Assumption \ref{A1s} was made by \citet{bickel2008regularized}, \citet{rothman2008sparse} and \citet{lam2009sparsistency},  and Assumption \ref{A3} holds if $X_1$ is multivariate normal. 
  Assumption \ref{A2} requires that $A$ and $B$ are both rank $p_n$, 
  which has the effect of shrinking every entry of $\hat{\Omega}$. 
  The convergence rate bounds we establish also depend on the quantity
  $$\xi(p_n, \mathcal{G}) = \sup_{M \in \mathbb{S}^{p_n}, M \neq 0 } 
  \frac{| \left[A M B\right]_{\mathcal{G}} |_1}{\|M\|_F},$$
  where $\mathbb{S}^{p_n}$ is the set of symmetric $p_n \times p_n$ matrices.  
  \citet{negahban2009unified} defined a similar and more general 
  quantity and called it a compatibility constant. 
  \begin{theorem}\label{knownAB_consist}
  Under Assumptions \ref{A1s}--\ref{A3}, if 
  $\lambda_n = K_1 (n^{-1}\log p_n)^{1/2}$, $K_1$ is sufficiently large, and 
  $\xi^2(p_n, \mathcal{G}) \log p_n = o(n)$, then (i)
  $\| \hat{\Omega} - \Omega_* \|_F =  O_P \{ \xi(p_n, \mathcal{G}) (\log p_n/n)^{1/2}\} $
  and
   (ii) $|A\hat{\Omega}B - A\Omega_*B |_1 = O_P\{  \xi^2(p_n, \mathcal{G}) ( \log p_n/n)^{1/2} \}$.
  \end{theorem}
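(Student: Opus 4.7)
The proof plan is to adapt the basic-inequality strategy of \citet{rothman2008sparse} for the graphical lasso to the transformed penalty $|A\Omega B - C|_1$, using the compatibility constant $\xi(p_n,\mathcal{G})$ in place of the usual $(s+p_n)^{1/2}$ sparsity factor. Let $\Delta = \Omega - \Omega_*$ and decompose the centered excess objective $G(\Delta)=T_1(\Delta)+T_2(\Delta)+T_3(\Delta)$, where $T_1(\Delta) = {\rm tr}\{(S_n - \Omega_*^{-1})\Delta\}$, $T_2(\Delta) = -\log{\rm det}(\Omega_*+\Delta) + \log{\rm det}(\Omega_*) + {\rm tr}(\Omega_*^{-1}\Delta)$, and $T_3(\Delta) = \lambda_n\{|A(\Omega_*+\Delta)B - C|_1 - |A\Omega_*B - C|_1\}$. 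Since $G$ is convex with $G(0)=0$ and $G(\hat{\Omega}-\Omega_*)\le 0$ by optimality, it suffices to exhibit a deterministic radius $r_n$ of order $\xi(p_n,\mathcal{G})(\log p_n/n)^{1/2}$ such that $G(\Delta)>0$ for every $\Delta$ on the Frobenius sphere $\|\Delta\|_F = r_n$; convexity then forces $\hat{\Omega}-\Omega_*$ into the interior ball.

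For $T_2$, the standard integral Taylor expansion of $-\log{\rm det}$ combined with Assumption \ref{A1s} yields $T_2(\Delta)\ge c_1 \|\Delta\|_F^2$ provided $\|\Delta\|_F$ is smaller than a constant depending only on $k_1$; the hypothesis $\xi^2(p_n,\mathcal{G})\log p_n = o(n)$ guarantees $r_n = o(1)$, keeping the quadratic lower bound valid on the sphere. For $T_3$, applying the triangle inequality separately on $\mathcal{G}$ and its complement gives $T_3(\Delta) \ge \lambda_n\{|[A\Delta B]_{\mathcal{G}^c}|_1 - |[A\Delta B]_{\mathcal{G}}|_1\}$, and the definition of the compatibility constant bounds $|[A\Delta B]_{\mathcal{G}}|_1 \le \xi(p_n,\mathcal{G})\|\Delta\|_F$.

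The main obstacle is controlling $T_1$, because $\Delta$ need not be sparse and the classical bound $|T_1|\le |S_n - \Omega_*^{-1}|_\infty |\Delta|_1$ is too crude. Assumption \ref{A2} is what makes the argument go through: $A^\T A$ and $BB^\T$ are both invertible, so writing $A^+ = (A^\T A)^{-1}A^\T$ and $B^+ = B^\T(BB^\T)^{-1}$ yields the identity $\Delta = A^+(A\Delta B)B^+$, which after cycling the trace gives $T_1(\Delta) = {\rm tr}\{\tilde W \cdot A\Delta B\}$ with $\tilde W = B^+(S_n-\Omega_*^{-1})A^+$. H\"older's inequality then produces $|T_1|\le |\tilde W|_\infty\{\xi(p_n,\mathcal{G})\|\Delta\|_F + |[A\Delta B]_{\mathcal{G}^c}|_1\}$, exactly the form needed so that the $\mathcal{G}^c$ portion of $T_3$ can absorb the $\mathcal{G}^c$ portion of this bound. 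The technical heart of the proof is the concentration lemma $|\tilde W|_\infty = O_P\{(\log p_n/n)^{1/2}\}$, obtained by writing each entry as $[\tilde W]_{kl} = u_k^\T(S_n-\Omega_*^{-1})v_l$ with $\|u_k\|_2,\|v_l\|_2\le k_2^{-1}$ under Assumption \ref{A2}, observing that it is a centered average over $i=1,\dots,n$ of $(u_k^\T X_i)(v_l^\T X_i)$, which is sub-exponential under Assumption \ref{A3}, and applying a Bernstein inequality with a union bound over the $ab$ entries.

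Taking $\lambda_n = K_1(\log p_n/n)^{1/2}$ with $K_1$ large enough that $\lambda_n \ge 2|\tilde W|_\infty$ on the high-probability event makes the $|[A\Delta B]_{\mathcal{G}^c}|_1$ contributions combine nonnegatively and gives $G(\Delta)\ge c_1\|\Delta\|_F^2 - (3/2)\lambda_n \xi(p_n,\mathcal{G})\|\Delta\|_F$, strictly positive for $\|\Delta\|_F$ of order $\lambda_n\xi(p_n,\mathcal{G})$; this establishes (i). For (ii), I would apply the same decomposition to the inequality $G(\hat{\Omega}-\Omega_*)\le 0$ and drop the nonnegative $T_2$ term to obtain $(\lambda_n - |\tilde W|_\infty)|[A(\hat{\Omega}-\Omega_*)B]_{\mathcal{G}^c}|_1 \le (\lambda_n + |\tilde W|_\infty)\xi(p_n,\mathcal{G})\|\hat{\Omega}-\Omega_*\|_F$; inserting the Frobenius rate from (i) and combining with the trivial bound $|[A(\hat{\Omega}-\Omega_*)B]_\mathcal{G}|_1\le \xi(p_n,\mathcal{G})\|\hat{\Omega}-\Omega_*\|_F$ produces the claimed $O_P\{\xi^2(p_n,\mathcal{G})(\log p_n/n)^{1/2}\}$ rate.
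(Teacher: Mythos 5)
Your proposal follows essentially the same route as the paper: the same convexity/basic-inequality argument on the Frobenius sphere, the same three-term decomposition with the log-det curvature bound and reverse triangle inequality on the penalty, the same key identity inserting $A^{+}A=I_p$ and $BB^{+}=I_p$ to rewrite the stochastic term as a pairing of $B^{+}(S_n-\Omega_*^{-1})A^{+}$ with $A\Delta B$, and the same cone condition plus compatibility-constant bound for part (ii). The only cosmetic difference is that you sketch the sub-exponential/Bernstein concentration of $\|B^{+}(S_n-\Omega_*^{-1})A^{+}\|_\infty$ directly, whereas the paper obtains it by citing Lemma 2 of \citet{lam2009sparsistency} and Lemma A.3 of \citet{bickel2008regularized}.
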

  The quantity $\xi(p_n, \mathcal{G})$ can be used to recover known results for 
  special cases of \eqref{eq:estimator}. For example, when $A$ and $B$ are identity matrices, 
  $\xi(p_n,\mathcal{G}) = {s_n}^{1/2}$, where $s_n$ is the number of 
  nonzero entries in $\Omega_*$. This special case was established by 
  \citet{rothman2008sparse}. We can simplify the results of Theorem 
  \ref{knownAB_consist} for case that $A \Omega_* B$ has $g_n$ nonzero entries 
  by introducing an additional assumption:
  \begin{assumption}\label{A4}For all $n$, there exists a constant $k_5$ such that 
  $$\sup_{M \in \mathbb{S}^{p_n}, M \neq 0} 
  \frac{ \| [A M B]_\mathcal{G} \|_F}{\|M\|_F} \leq k_5 < \infty.$$ 
  \end{assumption}
  Assumption \ref{A4} is not the same as bounding $\xi(p_n,\mathcal{G}),$ 
  because the numerator uses the Frobenius norm instead of the $L_1$ norm. 
  This requires that for those entries of $A \Omega_* B$ which are nonzero, 
  the corresponding rows and columns of $A$ and $B$, do not have magnitudes too large as $p_n$ grows. 
  \begin{corol}\label{remark_gn}
  Under the conditions of Theorem \ref{knownAB_consist} and Assumption \ref{A4}, and if $A \Omega_* B$ has $g_n$ nonzero entries, then (i)
  $ \| \hat{\Omega} - \Omega_* \|_F = O_P\{ (g_n \log p_n/n)^{1/2}\}$ 
  and \\
  (ii) $| A\hat{\Omega}B - A\Omega_*B|_1 = O_P\{ (g_n^2 \log p_n/n)^{1/2}\}.$
  \end{corol}

  In practice, $A$ and $B$ are often unknown and must be estimated by $\hat{A}_n$ and $\hat{B}_n$, say. 
  In this case, we estimate $A\Omega_*B$ with $\hat{A}_n \tilde{\Omega} \hat{B}_n$, where 
  \begin{equation}\label{estimator_AB_unknown} 
  \tilde{\Omega} = \argmin_{\Omega \in \mathbb{S}_+^{p_n}}
  \left\{  {\rm tr}(S_n\Omega) - \log {\rm det}(\Omega) +  \lambda_n |\hat{A}_n \Omega \hat{B}_n|_1 \right\}.
  \end{equation}


  To establish convergence rate bounds for $\hat{A}_n \tilde{\Omega} \hat{B}_n$, we require an assumption on the asymptotic properties of $\hat{A}_n$ and $\hat{B}_n$:
   \begin{assumption}\label{A5} There exist sequences $a_n = o(1)$ and $b_n = o(1)$ such that
  $ |(A - \hat{A}_n)A^{+} |_1 = O_P\left(a_n\right)$ and $|B^{+}(B - \hat{B}_n)|_1  = O_P\left(b_n\right)$,
  where $A^{+}$ and $B^{+}$ are the Moore--Penrose pseudoinverses of $A$ and $B$. 
  \end{assumption}
 The convergence rate bounds we establish for \eqref{estimator_AB_unknown} also depend on the quantity $\Phi_n = \max (a_n, b_n) |\left[A \Omega_* B\right]_\mathcal{G}|_1,$ which partly controls the additional error incurred by using estimates of $A$ and $B$ in \eqref{estimator_AB_unknown}.

  \begin{theorem}\label{unknownAB_consist}
  Under Assumptions \ref{A1s}--\ref{A3} and \ref{A5}, if $\lambda_n = K_2 (\log p_n/n)^{1/2}$, $K_2$ is sufficiently large,
  $\xi^2(p_n, \mathcal{G})\log p_n = o(n)$, 
  and 
  $\Phi_n^2 \log p_n = o(n)$, 
  then \\
  (i) $\| \tilde{\Omega} - \Omega\|_F  =  O_P \{ \xi(p_n, \mathcal{G})(\log p_n/n)^{1/2} +   
  \Phi_n^{1/2} (\log p_n/n)^{1/4}\}$ and \\
  (ii) $|\hat{A}_n \tilde{\Omega} \hat{B}_n - A\Omega_* B|_1 = O_P\{ \xi^2(p_n, \mathcal{G}) (\log p_n/n)^{1/2} + \Phi_n^{1/2}  \xi(p_n, \mathcal{G}) (\log p_n/n)^{1/4} + \Phi_n\}.$
  \end{theorem}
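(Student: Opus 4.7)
The plan is to mirror the proof of Theorem \ref{knownAB_consist} while tracking the additional error introduced by plugging in $\hat{A}_n$ and $\hat{B}_n$. Define the centered objective
$$Q_n(\Delta) = {\rm tr}(S_n \Delta) - \{\log{\rm det}(\Omega_* + \Delta) - \log{\rm det}(\Omega_*)\} + \lambda_n\{|\hat{A}_n (\Omega_* + \Delta) \hat{B}_n|_1 - |\hat{A}_n \Omega_* \hat{B}_n|_1\}$$
for symmetric $\Delta$ with $\Omega_* + \Delta \in \mathbb{S}_+^{p_n}$. Since $Q_n$ is convex with $Q_n(0) = 0$, the standard route is to choose a radius $r_n$ and show $Q_n(\Delta) > 0$ on the sphere $\|\Delta\|_F = r_n$; convexity then forces $\tilde{\Omega} - \Omega_*$ to lie inside that ball. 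For the likelihood part I would reuse two ingredients from the proof of Theorem \ref{knownAB_consist}: Assumption \ref{A1s} gives a Taylor lower bound $\log{\rm det}(\Omega_*) - \log{\rm det}(\Omega_* + \Delta) + {\rm tr}(\Sigma_* \Delta) \geq c_1\|\Delta\|_F^2$ for $\|\Delta\|_F$ sufficiently small, and Assumption \ref{A3} yields $\|S_n - \Sigma_*\|_\infty = O_P((\log p_n/n)^{1/2})$, which together with the $\mathcal{G}/\mathcal{G}^c$ argument controls the stochastic cross-term ${\rm tr}\{(S_n - \Sigma_*)\Delta\}$ exactly as in the known-$A,B$ case.

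The new work lies in the penalty. Assumption \ref{A2} implies $A^+ A = I_{p_n}$ and $B B^+ = I_{p_n}$, so for every $M \in \mathbb{S}^{p_n}$,
$$\hat{A}_n M \hat{B}_n - A M B = (\hat{A}_n - A) A^+ (A M B) + (A M B) B^+ (\hat{B}_n - B) + (\hat{A}_n - A) A^+ (A M B) B^+ (\hat{B}_n - B).$$
Applying this identity at $M = \Omega_*$ and $M = \Omega_* + \Delta$, subtracting, and invoking the hypotheses $|(A - \hat{A}_n) A^+|_1 = O_P(a_n)$ and $|B^+(B - \hat{B}_n)|_1 = O_P(b_n)$, I can bound the plug-in perturbation of the penalty difference by $O_P(M_n |A \Delta B|_1 + C_n)$. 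The remaining oracle difference $|A(\Omega_* + \Delta) B|_1 - |A \Omega_* B|_1$ is handled as in Theorem \ref{knownAB_consist}: the $\mathcal{G}/\mathcal{G}^c$ decomposition yields the lower bound $|[A \Delta B]_{\mathcal{G}^c}|_1 - |[A \Delta B]_\mathcal{G}|_1$, and $|[A \Delta B]_\mathcal{G}|_1 \leq \xi(p_n, \mathcal{G})\|\Delta\|_F$ by the definition of $\xi$. Assembling these pieces, $Q_n(\Delta) \geq c_1\|\Delta\|_F^2 - c_2 \lambda_n \xi(p_n, \mathcal{G}) \|\Delta\|_F - c_3 \lambda_n C_n$ with high probability, and the choice $r_n \asymp \xi(p_n, \mathcal{G})(\log p_n/n)^{1/2} + C_n^{1/2}(\log p_n/n)^{1/4}$ makes the quadratic term dominate, giving part (i).

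For part (ii), let $\tilde{\Delta} = \tilde{\Omega} - \Omega_*$ and apply the triangle inequality
$$|\hat{A}_n \tilde{\Omega} \hat{B}_n - A \Omega_* B|_1 \leq |\hat{A}_n \tilde{\Delta} \hat{B}_n - A \tilde{\Delta} B|_1 + |A \tilde{\Delta} B|_1 + |\hat{A}_n \Omega_* \hat{B}_n - A \Omega_* B|_1.$$
The first term is $O_P(M_n |A \tilde{\Delta} B|_1)$ by the expansion above, the third is $O_P(C_n)$, and $|A \tilde{\Delta} B|_1$ is controlled by combining $|[A \tilde{\Delta} B]_\mathcal{G}|_1 \leq \xi(p_n, \mathcal{G}) \|\tilde{\Delta}\|_F$ with a cone-type bound on $|[A \tilde{\Delta} B]_{\mathcal{G}^c}|_1$ obtained by isolating the $\mathcal{G}^c$ term inside the inequality $Q_n(\tilde{\Delta}) \leq 0$. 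The main obstacle I anticipate is the bookkeeping: the perturbation of order $M_n |A \Delta B|_1$ must be genuinely dominated at scale $r_n$ and must not destroy the $\mathcal{G}/\mathcal{G}^c$ cancellation that drives the consistency argument, which is precisely why the conditions $M_n = o(1)$, $\xi^2(p_n, \mathcal{G})\log p_n = o(n)$, and $C_n^2 \log p_n = o(n)$ appear together in the hypotheses.
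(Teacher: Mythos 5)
Your proposal is correct and follows essentially the same route as the paper: the paper's Lemma 4 is exactly your pseudo-inverse expansion of $\hat{A}_nM\hat{B}_n - AMB$ yielding the $\tilde{M}_n(|A\Delta B|_1 + 2|[A\Omega_*B]_{\mathcal G}|_1)$ perturbation, its Lemma 5 is your positivity-on-the-sphere argument with the $M_n|A\Delta B|_1$ term split across $\mathcal G$ and $\mathcal G^c$ and absorbed (valid since $\tilde M_n = o(1)$), the radius is obtained by solving the same quadratic $c_1 r^2 - c_2\lambda\xi r - c_3\lambda C_n = 0$, and part (ii) uses the same triangle-inequality decomposition together with the cone condition from $Q_n(\tilde\Delta)\le 0$ (the paper's Lemma 6).
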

  The convergence rate bounds in Theorem \ref{unknownAB_consist} are the sums of the statistical errors from Theorem \ref{knownAB_consist}
  plus additional errors from estimating $A$ and $B$.


  \section{Simulation studies}
    
  \subsection{Models}
  We compare our precision matrix estimator to competing estimators when they are used to fit the linear discriminant analysis model. For 100 independent replications, we generated a realization of $n$
  independent copies of $(X,Y)$ defined in \eqref{Normal_Model_LDA}, where $\mu_{*j} =  \Omega_*^{-1} \beta_{*j}$ and ${\rm pr}(Y = j) = 1/J$ for $j = 1, \dots, J$. Using this construction, because $\beta_{*l} - \beta_{*m} = \Omega_*(\mu_{*l} - \mu_{*m})$, if the $k$th element of $\beta_{*l} - \beta_{*m}$ is zero, then the $k$th variable is non-informative for discriminating between response categories $l$ and $m$.
    
    For each $J \in \left\{3, \dots, 10\right\}$, we partition our $n$ observations into a training set of size $25 J$, a validation set of size $200$, and a test set of size 1000. We considered two models for $\Omega_*^{-1}$ and $\beta_{*j}.$ 
  Let $\mathds{1}(\cdot)$ be the indicator function. \\

  \textit{Model 1.} We set $\beta_{*j,k} = 1.5 \hspace{2pt}\mathds{1} \left[ k \in \left\{4(j-1) + 1, \dots, 4j\right\}\right],$
  so that for any pair of response categories, only eight 
  variables were informative for discrimination. 
  We set $\Omega^{-1}_{*a,b} = .9^{|a-b|}$, so that $\Omega_*$ was tridiagonal. \\

  \textit{Model 2.}  We set $\beta_{*j,k} = 2 \hspace{2pt}\mathds{1} \left[k \in \left\{5(j-1) + 1, \dots, 5j\right\} \right],$
  so that for any pair of response categories, only ten variables were informative for discrimination. 
  We set $\Omega_*^{-1}$ to be block diagonal: the block corresponding 
  to the informative variables, i.e., the first $5J$ variables, had 
  off-diagonal entries equal to 0.5 and diagonal entries equal to one. The block submatrix corresponding 
  to the $p - 5J$ non-informative variables had $(a,b)$th entry equal to $0.5^{|a-b|}.$\\

  For both models, sparse estimators of $\Omega_*$ should perform well because the population precision 
  matrices are very sparse and invertible.
  The total number of informative variables is $4J$ and $5J$ in Models 1 and 2 respectively, 
  so a method like that proposed by \citet{mai2015multiclass}, which selects 
  variables that are informative for all pairwise response category comparisons, may perform poorly when $J$ is large. 

     \begin{figure}[t!]
   \centerline{\hfill\makebox[2.5in]{(a) Model 1}
      \hfill\makebox[2.5in]{(b) Model 2}\hfill}
  \centerline{\hfill
      \includegraphics[width=2.5in]{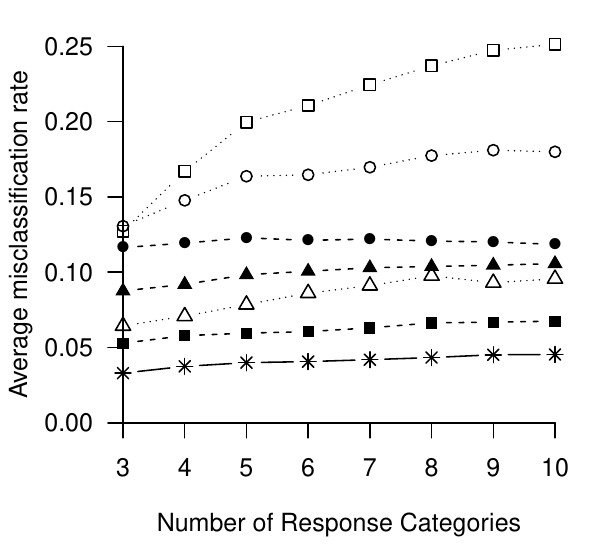}
      \hfill
     \includegraphics[width=2.5in]{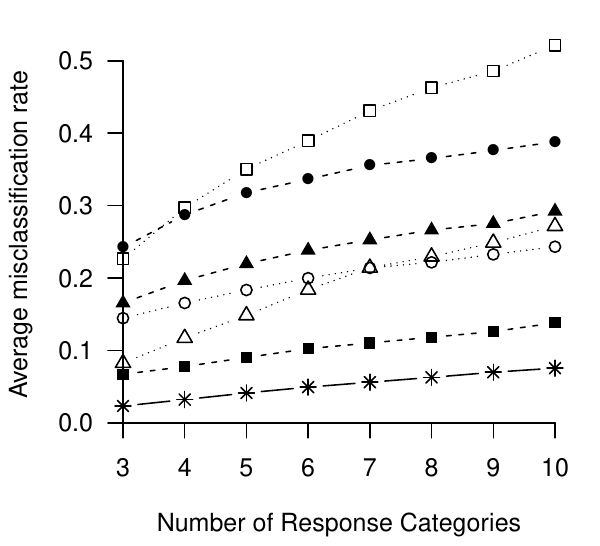}
      \hfill}  
      \centerline{\hfill\makebox[2.5in]{(c) Model 1}
      \hfill\makebox[2.5in]{(d) Model 2}\hfill}     
  \centerline{\hfill
      \includegraphics[width=2.5in]{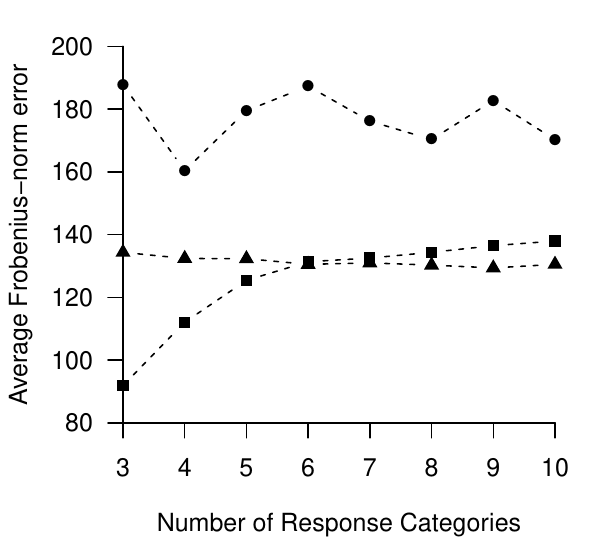}
      \hfill
      \includegraphics[width=2.5in]{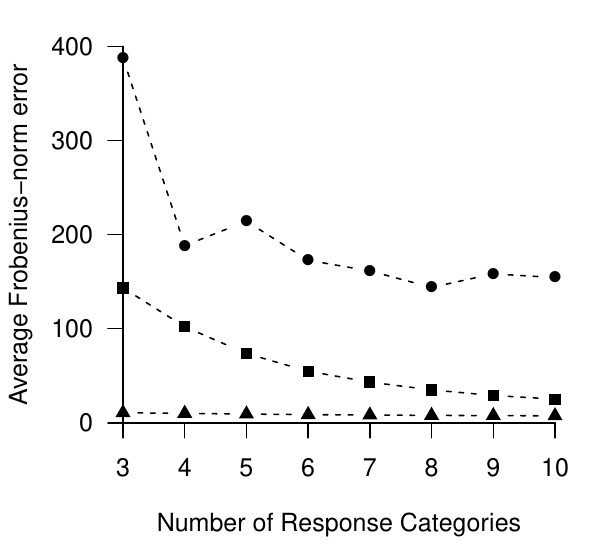}
      \hfill}           
  \caption{Misclassification rates and Frobenius norm error averaged over 100 replications with $p=200$ for Models 1 and 2. The methods displayed are the estimator we proposed in Section \ref{proposed} (dashed and $\blacksquare$), the $L_1$-penalized Gaussian likelihood estimator (dashed and $\blacktriangle$), the Ledoit-Wolf-type estimator $\hat{\Omega}^{-1}_{\rm LW}$ (dashed and $\newmoon$), Bayes (solid and {\large $\ast$}), the method proposed by \citet{guo2010simultaneous} (dots and $\fullmoon$), the method proposed by \citet{mai2015multiclass} (dots and $\triangle$), and the method proposed by \citet{witten2011penalized} (dots and $\square$).}
   \label{fig:misclass_mod1_2_later}
  \end{figure}

  \subsection{Methods}\label{sec:methods}
    We compared several methods in terms of classification accuracy on the test set. 
  We fit \eqref{Normal_Model_LDA} using the the
  sparse na\"ive Bayes estimator proposed by \citet{guo2010simultaneous} with 
  tuning parameter chosen to minimize misclassification rate on the validation set; the Bayes rule, i.e., $\Omega_*$, 
  $\mu_{*j}$, and ${\rm pr}(Y = j)$ known for $j = 1, \dots, J$. 
  We also fit \eqref{Normal_Model_LDA} using the ordinary sample means and using 
  the precision matrix estimator proposed in Section \ref{proposed} 
    with tuning parameter chosen 
  to minimize misclassification rate on the validation set and $B$ estimated using the sample means; 
  the $L_1$-penalized Gaussian likelihood precision matrix estimator 
  \citep{yuan2007model,rothman2008sparse,friedman2008sparse} with 
  tuning parameter chosen to minimize the misclassification rate of the 
  validation set; and a covariance matrix estimator similar to that proposed by \citet{ledoit2004well}, which is defined by
  $\hat{\Omega}^{-1}_{\rm LW} = \alpha S + \gamma (1 - \alpha) I_p,$
  where $(\alpha, \gamma) \in (0,1) \times (0, \infty)$ were chosen to 
  minimize the misclassification rate of the validation set. 
  The $L_1$-penalized Gaussian likelihood precision matrix estimator we used penalized the diagonals. 
  With our data-generating models, we found this performed better at classification than \eqref{graphical_lasso}, which does not penalize the diagonals. 
  We also tried two Fisher-linear-discriminant-based methods applicable to 
  multi-category linear discriminant analysis: the sparse linear discriminant 
  method proposed by \citet{witten2011penalized} with tuning parameter and 
  dimension chosen to minimize the misclassification rate of the validation set; and 
  the multi-category sparse linear discriminant method proposed by \citet{mai2015multiclass} 
  with tuning parameter chosen to minimize the misclassification rate of the validation set.

   \begin{figure}[t!]
   \centerline{\hfill\makebox[2.5in]{(a) Model 1}
      \hfill\makebox[2.5in]{(b) Model 2}\hfill}
  \centerline{\hfill
      \includegraphics[width=2.5in]{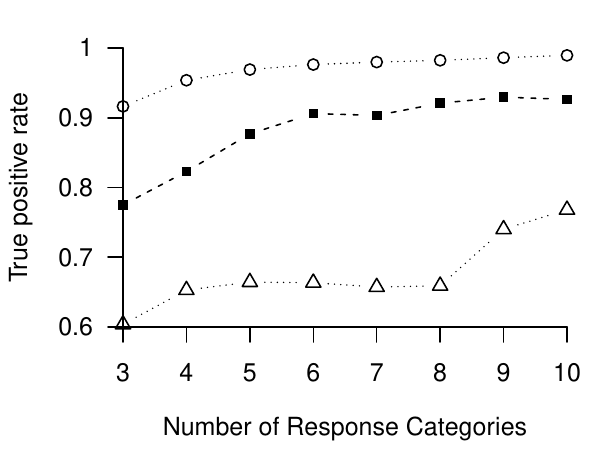}
      \hfill
     \includegraphics[width=2.5in]{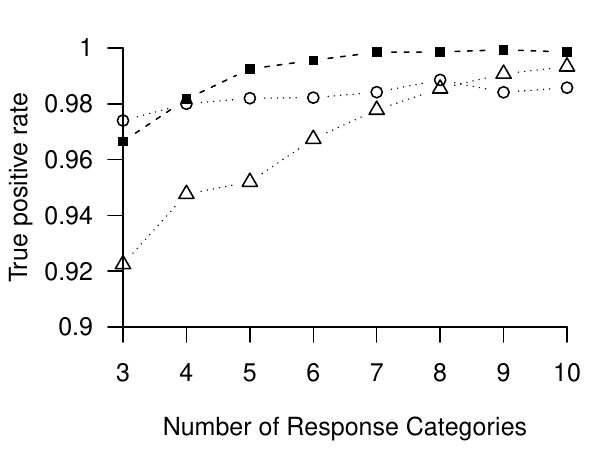}
      \hfill}  
      \centerline{\hfill\makebox[2.5in]{(c) Model 1}
      \hfill\makebox[2.5in]{(d) Model 2}\hfill}     
  \centerline{\hfill
      \includegraphics[width=2.5in]{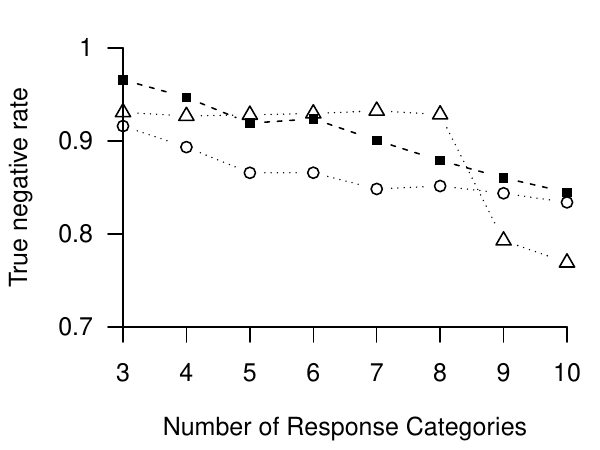}
      \hfill
      \includegraphics[width=2.5in]{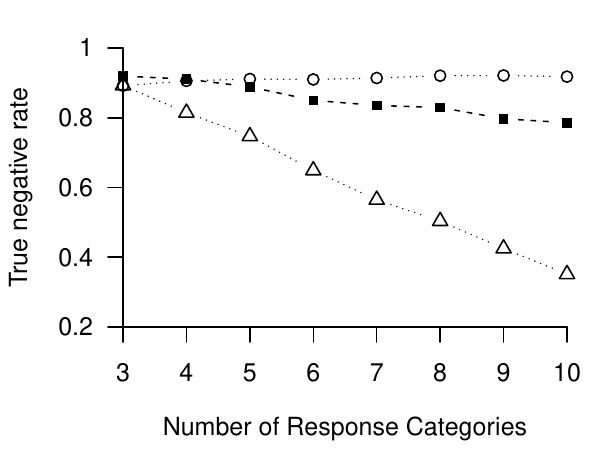}
      \hfill}           
 \caption{True positive and true negative rates averaged over 100 replications with $p=200$ for Model 1 in (a) and (c); and for Model 2 in (b) and (d). The methods displayed are the estimator we proposed in Section \ref{proposed} (dashed and $\blacksquare$), the method proposed by \citet{guo2010simultaneous} (dots and $\fullmoon$), and the method proposed by \citet{mai2015multiclass} (dots and $\triangle$).}
   \label{fig:VS_mod1_2}
  \end{figure}

  We could also have selected tuning parameters for the model-based methods 
  by maximizing a validation likelihood or using an information criterion, 
  but minimizing the misclassification rate on a validation set made it 
  fairer to compare the model-based methods and the Fisher-linear-discriminant-based 
  methods in terms of classification accuracy.

  \subsection{Performance measures}
  We measured classification accuracy on the test set for each replication for the methods described in Section \ref{sec:methods}. 
  For the methods that produced a precision matrix estimator, we also measured this estimator's Frobenius norm error:
  $ \| \bar{\Omega} - \Omega_*\|_F$, where $\bar{\Omega}$ is the estimator. To measure variable selection accuracy, 
  we used both the true positive rate and the true negative rate, which are respectively defined by 
  $$ \dfrac{{\rm card} \left\{(m,k): \hat{\Delta}_{m, k} 
  \neq 0 \cap \Delta_{*m, k} \neq 0 \right\}}{ 
  {\rm card}\left\{(m,k):\Delta_{*m, k} \neq 0 \right\}}, 
  \quad\dfrac{{\rm card} \left\{(m,k): \hat{\Delta}_{m, k} 
  = 0 \cap \Delta_{*m, k} = 0 \right\}}{ 
  {\rm card}\left\{(m,k):\Delta_{*m, k} = 0 \right\}},$$
  where $(m, k) \in \left\{2, \dots, J\right\} \times \left\{ 1,\dots, p\right\}$, $\Delta_{*m} = \beta_{*1} - \beta_{*m}$, $\hat{\Delta}_{m}$ is an estimator of $\Delta_{*m}$,
  and ${\rm card}$ denotes the cardinality of a set.

  \subsection{Results}
  We display average misclassification rates and Frobenius norm error averages for both models with $p=200$ in 
  Figure \ref{fig:misclass_mod1_2_later}, and display variable selection accuracy averages  in Figure \ref{fig:VS_mod1_2}. 
  For both models, our method outperformed all competitors in terms of classification 
  accuracy for all $J$, except the Bayes rule, which uses population parameter values unknown in practice. 
  In terms of precision matrix estimation, 
  for Model 1, our method did better than the $L_1$-penalized Gaussian 
  likelihood precision matrix estimator 
  when the sample size was small, but became worse than the $L_1$-penalized Gaussian 
  likelihood precision matrix estimator as the sample size increased. 
  For Model 2, our method was worse than the $L_1$-penalized Gaussian likelihood 
  precision matrix estimator in Frobenius norm error for precision matrix estimation, but 
  was better in terms of classification accuracy. 

  The precision matrix estimation results are consistent with our theory. For Model 1, as $J$ increases, the number of nonzero entries in $A \Omega_* B$ also increases. Our convergence rate bound for estimating $\Omega_*$ gets worse as the number of nonzero entries in $A\Omega_*B$ increases, which may explain why our estimator's Frobenius norm error gets worse as $J$ increases. Because the sample size increases with $J$ and the number of nonzero entries in $\Omega_*$ is fixed for all $J$, it is expected that the $L_1$-penalized Gaussian likelihood estimator improves as $J$ increases. For Model 2, as $J$ increases, more entries in $\Omega_*$ become close to zero so the Frobenius norm error for any shrinkage estimator should decrease. However, an important point is illustrated by Model 2: although the $L_1$-penalized Gaussian likelihood estimator is more accurate in terms of estimating the precision matrix, our proposed estimator still performs better in terms of classification accuracy.

  In terms of variable selection, our method was competitive with 
  the methods proposed by \citet{guo2010simultaneous} and \citet{mai2015multiclass}. 
  For Model 1, our method tended to have a higher average true negative rate than the 
  method of \citet{guo2010simultaneous} and a lower average true positive rate than 
  the method of \citet{mai2015multiclass}. For Model 2, all methods tended to have 
  relatively high average true positive rates, while our method had a higher average 
  true negative rate than the method of \citet{mai2015multiclass}.  Although the method proposed by \citet{guo2010simultaneous} had a higher average true negative rate for Model 2 than our proposed method had, our method performed better in terms of classification accuracy. 

  The performance of the method proposed by \citet{mai2015multiclass} can be partially explained by their method's variable selection properties. Their method either includes or excludes variables for discriminating between all pairwise response category comparisons. As $J$ increases, many variables are informative for only a small number of pairwise comparisons. Their method's low true negative rate for Model 2 suggests that it selects a large number of uninformative variables as $J$ increases.

  \begin{figure}[t!]
  \centerline{\hfill
      \makebox[2.6in]{(a) }
      \hfill
    \makebox[2.6in]{(b)}
      \hfill}  
  \centerline{\hfill
      \includegraphics[width=2.6in]{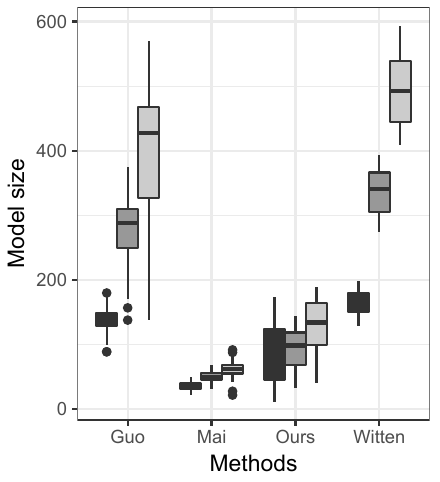}
      \hfill
     \includegraphics[width=2.6in]{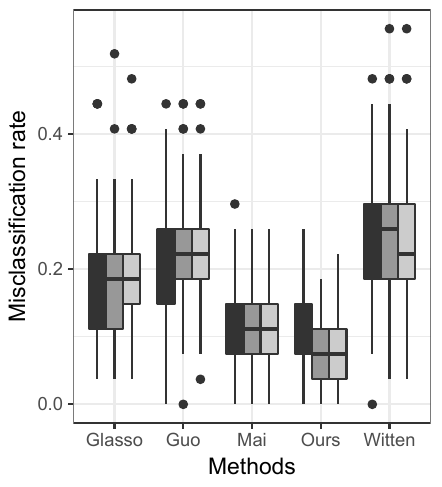}
      \hfill}  
      \caption{Model sizes and misclassification rates from 100 random training/testing splits with $k=100$ (dark grey), $k=200$ (grey), and $k=300$ (light grey). Guo is the method proposed by \citet{guo2010simultaneous}, Mai is the method proposed by \citet{mai2015multiclass}, Glasso is the $L_1$-penalized Gaussian likelihood precision matrix estimator, Ours is the estimator we propose Section \ref{proposed}, and Witten is the method proposed by \citet{witten2011penalized}.  Model size is the number of nonzero entries summed across the estimated discriminant vectors.}\label{fig:genomic_example}
  \end{figure}

  \section{Genomic data example}
  We used our method to fit the linear discriminant analysis model in a real data application. 
  The data are gene expression profiles consisting of $p=22,238$ genes from 127 subjects, 
  who either have Crohn's disease, ulcerative colitis, or neither. 
  This dataset comes from \citet{burczynski2006molecular}. 
  The goal of our analysis was to fit a linear discriminant analysis model that could be 
  used to identify which genes are informative for discriminating between each pair of the 
  response categories. 
  These data were also analyzed in \citet{mai2015multiclass}.

  To measure the classification accuracy of our method and its competitors, 
  we randomly split the data into training set of size 100 and test set of size 27 
  for 100 independent replications. Within each replication, 
  we first applied a screening rule to the training set as in \citet{rothman2009generalized} 
  and \citet{mai2015multiclass} based on $F$-test statistics, and then restricted our 
  discriminant analysis model to the genes with the $k$ largest $F$-test 
  statistic values. 

  We chose tuning parameters with five-fold 
  cross validation that minimized the validation classification error rate. Misclassification rates are shown in Figure \ref{fig:genomic_example}(b), 
  where we compared our method to those of \citet{mai2015multiclass}, of \citet{witten2011penalized}, of \citet{guo2010simultaneous}, 
  and the method that used the $L_1$-penalized Gaussian likelihood precision matrix estimator. 
  Our method was at least as accurate in terms of 
  classification as the competing methods. 
  The only method that performed nearly as well was that of \citet{mai2015multiclass} 
  with $k=100$ screened genes. However, the best out-of-sample classification 
  accuracy was achieved with $k=300,$ where our method was 
  significantly better than the competitors.

  Although the method of \citet{mai2015multiclass} 
  tended to estimate smaller models, our method, which performs best in 
  classification, selects only slightly more variables. Moreover, unlike that of \citet{mai2015multiclass}, our method can be used to 
  identify a distinct subset of genes that are informative specifically for discriminating between patients with Crohn's 
  disease and ulcerative colitis. This was of interest in the study of \citet{burczynski2006molecular}. In the Supplementary Material, we provide additional details and further results.

\section*{Acknowledgements}
We thank the associate
editor and two referees for helpful comments. 
A. J. Molstad's research was supported in part by the Doctoral Dissertation Fellowship from the University of Minnesota. 
A. J. Rothman's research is supported in part by the National Science Foundation. 
 
\section*{Supplementary Material}
Supplementary material available at \textit{Biometrika} online includes proofs of Theorem 1 and 2; and additional information about the genomic data example.

\bibliographystyle{biometrika}

\bibliography{Characteristic_Shrinkage}

\clearpage

\title{Supplementary Material for ``Shrinking characteristics of precision matrix estimators''}
\author{Aaron J. Molstad$^1$ \hspace{2pt} and Adam J. Rothman$^2$\\
Biostatistics Program, Fred Hutchinson Cancer Research Center$^1$\\
  School of Statistics, University of Minnesota$^2$\\
  \texttt{amolstad@fredhutch.org}$^1$ \hspace{5pt} \texttt{arothman@umn.edu}$^2$}
\date{}

\maketitle
\section{Proofs}
\subsection{Notation}
  Define the following norms: $\|A\|_{\infty} = \max_{i,j} |A_{ij}|,$
   $|A|_1 = \sum_{i,j} |A_{ij}|$, and $\|A\|_F = {\rm tr}(A^\T A)$. Let $\mathbb{S}^{p_n}$ 
   denote the set of $p_n \times p_n$ symmetric matrices. To simplify notation, let $\kappa = k_1^{-2}$. 

  \subsection{Proof of Theorem 1}
    To prove Theorem 1, we use a strategy similar to that employed by \citet{rothman2012positive}. 
  \begin{lemma}\label{lemma1}
  Under Assumptions 1--3, 
  if $\lambda_n \leq \epsilon \kappa \left\{ \xi(p_n, \mathcal{G}) \eta_1\right\}^{-1}$ for some 
  $\eta_1 > 12,$ then for all positive and sufficiently small
  $\epsilon$, $\| B^{+}(S_n - \Omega_*^{-1})A^{+}\|_{\infty} \leq \lambda_n/2$ 
  implies $\|\hat{\Omega} - \Omega_*\|_F \leq \epsilon$ for sufficiently large $n$. 
  \end{lemma}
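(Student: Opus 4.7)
The plan is to apply the classical basin-of-attraction argument used by \citet{rothman2008sparse} and adapted by \citet{rothman2012positive}. Let $Q(\Omega) = {\rm tr}(S\Omega) - \log{\rm det}(\Omega) + \lambda |A\Omega B - C|_1$ and consider $G(\Delta) = Q(\Omega_* + \Delta) - Q(\Omega_*)$ as a function of $\Delta \in \mathbb{S}^p$. Because $Q$ is strictly convex on $\mathbb{S}^p_+$ and $\hat{\Omega}$ is its unique minimizer, it suffices to show that $G(\Delta) > 0$ for every $\Delta \in \mathbb{S}^p$ with $\|\Delta\|_F = \epsilon$; convexity then forces $\hat{\Omega} - \Omega_*$ to lie in the open Frobenius ball of radius $\epsilon$.

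I would split $G(\Delta)$ into three pieces: the linear term ${\rm tr}\{(S-\Omega_*^{-1})\Delta\}$, the Taylor remainder $\tilde R(\Delta)$ from $-\log{\rm det}$, and the penalty difference. Writing
\[
\tilde R(\Delta) = {\rm vec}(\Delta)^\T\!\left\{\int_0^1 (1-v)(\Omega_* + v\Delta)^{-1}\otimes (\Omega_* + v\Delta)^{-1}\, dv\right\}\!{\rm vec}(\Delta),
\]
Assumption \ref{A1s} gives $\tilde R(\Delta) \geq \|\Delta\|_F^2/\{2(k_1+\epsilon)^2\}$, which is at least $\kappa\|\Delta\|_F^2/4$ once $\epsilon$ is small enough. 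For the linear term I would use Assumption \ref{A2}: since $A$ has full column rank and $B$ has full row rank, $A^+A = I_p$ and $BB^+ = I_p$, so the cyclic property of the trace yields
\[
{\rm tr}\{(S - \Omega_*^{-1})\Delta\} = {\rm tr}\{B^+(S - \Omega_*^{-1})A^+ \cdot A\Delta B\},
\]
and H\"older's inequality together with the hypothesis $\|B^+(S-\Omega_*^{-1})A^+\|_\infty \leq \lambda/2$ bounds this by $(\lambda/2)|A\Delta B|_1$ in magnitude. For the penalty difference I would apply the triangle inequality on $\mathcal{G}^c$ and the reverse triangle inequality on $\mathcal{G}$, noting $[A\Omega_* B]_{\mathcal{G}^c} = 0$, to obtain
\[
|A(\Omega_*+\Delta)B - C|_1 - |A\Omega_* B - C|_1 \geq |[A\Delta B]_{\mathcal{G}^c}|_1 - |[A\Delta B]_\mathcal{G}|_1.
\]

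Combining the three bounds and dropping the nonnegative $\mathcal{G}^c$ term gives $G(\Delta) \geq (\kappa/4)\|\Delta\|_F^2 - (3\lambda/2)|[A\Delta B]_\mathcal{G}|_1$. Invoking the definition of $\xi(p,\mathcal{G})$, which bounds $|[A\Delta B]_\mathcal{G}|_1 \leq \xi(p,\mathcal{G})\|\Delta\|_F$ for all $\Delta \in \mathbb{S}^p$, on the sphere $\|\Delta\|_F = \epsilon$ produces $G(\Delta) \geq \kappa\epsilon^2/4 - (3\lambda/2)\xi(p,\mathcal{G})\epsilon$. Plugging in $\lambda \leq \epsilon\kappa/\{\xi(p,\mathcal{G})\tau\}$ makes this at least $\kappa\epsilon^2\{1/4 - 3/(2\tau)\}$, which is strictly positive for any $\tau > 6$ and in particular for the assumed $\tau > 12$.

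The main obstacle I anticipate is the bookkeeping around the phrase \emph{sufficiently small} $\epsilon$: I must choose $\epsilon$ small enough that (i) $\Omega_* + \Delta$ stays in $\mathbb{S}^p_+$ uniformly over the Frobenius sphere, which is guaranteed once $\epsilon < k_1^{-1}$ by Assumption \ref{A1s}, and (ii) the $(k_1+\epsilon)^{-2} \geq \kappa/2$ tightening needed to reach the $\kappa/4$ lower bound on $\tilde R$ is valid. Beyond these technicalities, the proof is driven by two dualities (H\"older applied through the pseudoinverses $A^+$ and $B^+$, which is exactly where Assumption \ref{A2} enters), a triangle-inequality split on the support $\mathcal{G}$, and the compatibility constant $\xi(p,\mathcal{G})$, which translates $L_1$ control of $A\Delta B$ into Frobenius control of $\Delta$.
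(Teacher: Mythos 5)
Your proposal is correct and follows essentially the same argument as the paper's proof: the same basin-of-attraction reduction on the Frobenius sphere, the same three-way decomposition with H\"older's inequality applied through the pseudoinverses $A^{+}$ and $B^{+}$ (where Assumption \ref{A2} enters), the reverse triangle inequality split over $\mathcal{G}$ and $\mathcal{G}^c$, and the compatibility constant $\xi(p,\mathcal{G})$ to convert $L_1$ control of $[A\Delta B]_\mathcal{G}$ into Frobenius control of $\Delta$. The only difference is cosmetic: your integral-form Taylor remainder yields the curvature constant $\kappa/4$ where the paper imports $\kappa/8$ from \citet{rothman2008sparse}, which changes nothing since both give strict positivity under the stated condition $\tau > 12$.
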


 \begin{proof} We follow the proof techniques 
  used by \citet{rothman2008sparse}, \citet{negahban2009unified} and \citet{rothman2012positive}. 
  Define $B_\epsilon = \left\{ \Delta \in \mathbb{S}^{p_n}: \|\Delta\|_F = \epsilon \right\}.$ Let $f$ be the objective function in (2). 
  Because $f$ is convex and $\hat{\Omega}$ is its minimizer,
  ${\rm inf}\left\{ f(\Omega_* + \Delta): \Delta \in B_\epsilon \right\} > f(\Omega_*),$ 
  implies $\|\hat{\Omega} - \Omega_*\|_F \leq \epsilon$ \citep{rothman2008sparse}.  Define  $D(\Delta) = f(\Omega_* + \Delta) - f(\Omega_*).$ Then
  \begin{equation}
  D(\Delta) = {\rm tr}(S_n \Delta)  + \log {\rm det}(\Omega_*)
   - \log{\rm det}(\Omega_* + \Delta) + \lambda_n\left\{ |A(\Omega_* + \Delta)B|_1 - |A\Omega_*B|_1\right\}.\notag
   \end{equation}
  By the arguments used in \citet{rothman2008sparse}, $ \log {\rm det}(\Omega_*)
   - \log{\rm det}(\Omega_* + \Delta) \geq -{\rm tr}(\Omega_*^{-1}\Delta) + 8^{-1}\kappa \|\Delta\|_F^2,$
   so that
  \begin{equation}
  D(\Delta)  \geq {\rm tr}\left\{ \Delta(S_n - \Omega_*^{-1})\right\} 
  + \frac{1}{8}\kappa \|\Delta\|_F^2 + 
  \lambda_n\left\{ |A(\Omega_* + \Delta)B|_1 - |A\Omega_*B|_1\right\}. \label{D(Delta)2}
  \end{equation}
  We now bound $|A(\Omega_* + \Delta)B|_1 - |A\Omega_*B|_1$ in \eqref{D(Delta)2}. Recall that 
  $$\mathcal{G} = \left\{(i,j) \in \left\{1, \dots, a\right\} \times 
  \left\{1, \dots, b\right\}: [A\Omega_* B]_{ij} \neq 0 \right\}$$ and 
  $\mathcal{G}^{c} = \left\{1, \dots, a\right\} \times \left\{1, \dots, b\right\} \setminus \mathcal{G}$. 
  Since $|A\Omega_* B|_1  = | \left[ A \Omega_* B \right]_{\mathcal{G}} |_1$ and
  $|A(\Omega_* + \Delta)B|_1  = |\left[A\Omega_*B\right]_\mathcal{G} 
  + \left[A\Delta B\right]_{\mathcal{G}}|_1 + | \left[ A \Delta B \right]_{\mathcal{G}^c}|_1,$
  we can apply the reverse triangle inequality: 
  $|A(\Omega_* + \Delta)B|_1 - |A\Omega_*B|_1  \geq 
  |\left[A \Delta B \right]_{\mathcal{G}^c}|_1 - |\left[A \Delta B\right]_\mathcal{G}|_1. $
  Plugging this bound into \eqref{D(Delta)2}, 
  \begin{equation}
  D(\Delta) \geq  {\rm tr}\left\{(S_n - \Omega_*^{-1}) \Delta \right\} + \frac{1}{8}\kappa \|\Delta\|_F^2 +
   \lambda_n \left( |\left[A \Delta B \right]_{\mathcal{G}^c}|_1 - 
  |\left[A \Delta B\right]_\mathcal{G}|_1  \right). \label{D(Delta)Lemma2}
  \end{equation}
  We now bound ${\rm tr}\left\{(S_n - \Omega_*^{-1}) \Delta \right\}$. 
  Let $A^{+} = (A^\T A)^{-1}A^\T$ and $B^{+} = B^\T (BB^\T)^{-1}$. Because $A$
   and $B$ are both rank $p_n$ by Assumption 2, $A^{+}A = I_{p_n}$ and $BB^{+} = I_{p_n}$. Thus
   \begin{align}
  {\rm tr}\left\{(S_n - \Omega_*^{-1}) \Delta \right\} &\geq 
  - |{\rm tr}\left\{(S_n - \Omega_*^{-1})  \Delta \right\}| = 
  - |{\rm tr}\left\{(S_n - \Omega_*^{-1})A^{+}A  \Delta B B^{+} \right\}| \notag \\
   & = - |{\rm tr}\left\{B^{+}(S_n - \Omega_*^{-1})A^{+}A  \Delta B  \right\}| \notag \\
   & \geq -  \|B^{+}(S_n - \Omega_*^{-1})A^{+}\|_{\infty} |A\Delta B|_1. \label{tr(SOmega)} 
   \end{align}
    By assumption, $\|B^{+}(S_n - \Omega_*^{-1})A^{+}\|_{\infty} \leq \lambda_n/2$, so applying \eqref{tr(SOmega)} to \eqref{D(Delta)Lemma2}, 
  \begin{align}
  D(\Delta) &\geq  \frac{1}{8}\kappa \|\Delta\|_F^2 - \frac{\lambda_n}{2} | A\Delta B |_1 
    + \lambda_n \left( |\left[A\Delta B\right]_{\mathcal{G}^c}|_1 
  - |\left[A \Delta B \right]_{\mathcal{G}}|_1 \right) \label{D(Delta)forLemma5} \\
  & =   \frac{1}{8}\kappa \|\Delta\|_F^2 - \frac{\lambda_n}{2}
  \left( |\left[A\Delta B\right]_\mathcal{G}|_1 
  + |\left[A \Delta B \right]_{\mathcal{G}^c}|_1 \right)
    + \lambda_n \left( |\left[A\Delta B\right]_{\mathcal{G}^c}|_1 
  - |\left[A \Delta B \right]_{\mathcal{G}}|_1 \right) \notag \\
   &=  \frac{1}{8}\kappa \|\Delta\|_F^2 - \frac{3\lambda_n}{2}
  |\left[A\Delta B\right]_\mathcal{G}|_1  + \frac{\lambda_n}{2} |\left[A\Delta B\right]_{\mathcal{G}^c}|_1 \notag \\
   &\geq  \frac{1}{8}\kappa \|\Delta\|_F^2 - \frac{3\lambda_n}{2}
  |\left[A\Delta B\right]_\mathcal{G}|_1.  \label{D(Delta)3}
  \end{align}
    Finally, we bound the quantity $|\left[A \Delta B\right]_{\mathcal{G}}|_1$. 
    Multiplying and dividing $\Delta$ by $\|\Delta\|_F$,
    $$\left|\left[A\frac{\|\Delta\|_F}{\|\Delta\|_F} \Delta B \right]_{\mathcal{G}}\right|_1 
    = \|\Delta\|_F \left|\left[A \frac{1}{\|\Delta\|_F} \Delta B \right]_\mathcal{G}\right|_1 
    \leq \|\Delta\|_F \left( \sup_{M \in \mathbb{S}^p, M \neq 0} \frac{|\left[A M B \right]_\mathcal{G}|_1}{\|M\|_F} \right),$$ 
    so that $|\left[A \Delta B\right]_{\mathcal{G}}|_1 \leq \|\Delta\|_F \hspace{2pt} \xi(p_n, \mathcal{G})$.
    Hence, because $\|\Delta\|_F = \epsilon$ when $\Delta \in B_\epsilon$, if $\lambda_n \leq\epsilon \kappa\left\{\xi(p_n, \mathcal{G})\eta _1\right\}^{-1}$ 
    with $\eta_1 > 12$,
    \begin{align*}
     D(\Delta) & \geq \frac{1}{8}\kappa \|\Delta\|_F^2 
     - \frac{3\lambda_n}{2}  \|\Delta\|_F \xi(p_n, \mathcal{G}) \\ 
     & = \|\Delta\|_F^2 \left\{ \frac{1}{8}\kappa 
     - \frac{3 \lambda_n \xi(p_n, \mathcal{G})}{2\|\Delta\|_F} \right\}
     \geq  \epsilon^2 \left(\frac{1}{8}\kappa - \frac{3}{2\eta_1}\kappa \right) > 0. 
  \end{align*} 
  which establishes the desired result. 
  \end{proof}

The following lemma follows from the proof of Lemma 1 of \citet{negahban2009unified}. 
  \begin{lemma}\label{lemma2} Under the conditions of Lemma~\ref{lemma1}, $\hat{\Delta} = \hat{\Omega} - \Omega_*$ 
  belongs to the set 
  $$\left\{ \Delta \in \mathbb{S}^{p_n}: |\left[A \Delta B \right]_{\mathcal{G}^c}|_1 
  \leq 3 |\left[A \Delta B \right]_{\mathcal{G}}|_1\right\}.$$
  \end{lemma}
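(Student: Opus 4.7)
The plan is to reuse the chain of inequalities already derived in the proof of Lemma~\ref{lemma1} and exploit the fact that $\hat{\Omega}$ is the minimizer of the objective $f$ in \eqref{eq:estimator}. Specifically, en route to proving Lemma~\ref{lemma1} we established that, on the event $\|B^{+}(S - \Omega_*^{-1})A^{+}\|_{\infty} \leq \lambda/2$, every symmetric $\Delta$ satisfies
$$D(\Delta) \;\geq\; \frac{1}{8}\kappa \|\Delta\|_F^2 \;-\; \frac{3}{2}\lambda\, |[A\Delta B]_{\mathcal{G}}|_1 \;+\; \frac{1}{2}\lambda\, |[A\Delta B]_{\mathcal{G}^c}|_1,$$
where $D(\Delta) = f(\Omega_* + \Delta) - f(\Omega_*)$. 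This is exactly the step that sits between displays \eqref{D(Delta)forLemma5} and \eqref{D(Delta)3} above, and under the hypotheses of Lemma~\ref{lemma1} it is available for free.

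First, I would specialize this bound to $\Delta = \hat{\Delta} = \hat{\Omega} - \Omega_*$. Second, I would use the defining property of $\hat{\Omega}$: since it minimizes $f$ over $\mathbb{S}^p_+$ and $\Omega_* \in \mathbb{S}^p_+$, we have $f(\hat{\Omega}) \leq f(\Omega_*)$, equivalently $D(\hat{\Delta}) \leq 0$. Combining these two facts gives
$$0 \;\geq\; \frac{1}{8}\kappa \|\hat{\Delta}\|_F^2 \;-\; \frac{3}{2}\lambda\, |[A\hat{\Delta} B]_{\mathcal{G}}|_1 \;+\; \frac{1}{2}\lambda\, |[A\hat{\Delta} B]_{\mathcal{G}^c}|_1.$$
Dropping the non-negative Frobenius term on the right and rearranging yields $\tfrac{1}{2}\lambda\, |[A\hat{\Delta} B]_{\mathcal{G}^c}|_1 \leq \tfrac{3}{2}\lambda\, |[A\hat{\Delta} B]_{\mathcal{G}}|_1$, which is the claimed cone inclusion $|[A\hat{\Delta} B]_{\mathcal{G}^c}|_1 \leq 3\, |[A\hat{\Delta} B]_{\mathcal{G}}|_1$.

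There is no real obstacle in this argument — all of the work has already been done inside the proof of Lemma~\ref{lemma1}. The only thing that requires a moment of care is being explicit that the cone condition is being asserted on the same event $\{\|B^{+}(S - \Omega_*^{-1})A^{+}\|_{\infty} \leq \lambda/2\}$ on which the bound used above is valid; this is already absorbed into the phrase ``if the conditions of Lemma~\ref{lemma1} are true''. The proof is essentially an immediate corollary of the intermediate inequality extracted from Lemma~\ref{lemma1} together with the optimality of $\hat{\Omega}$.
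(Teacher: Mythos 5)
Your proof is correct and is exactly the argument the paper intends: the paper gives no explicit proof of this lemma, deferring to the proof of Lemma 1 of \citet{negahban2009unified}, which is precisely the basic-inequality-plus-decomposability argument you reproduce (optimality gives $D(\hat{\Delta})\leq 0$, the dual-norm event controls the trace term, and rearranging yields the cone condition). One small imprecision: the intermediate lower bound on $D(\Delta)$ does not hold for ``every symmetric $\Delta$'' because the $\tfrac{1}{8}\kappa\|\Delta\|_F^2$ curvature term is only valid for $\|\Delta\|_F$ sufficiently small, so you should either first invoke the conclusion of Lemma~\ref{lemma1} that $\|\hat{\Delta}\|_F\leq\epsilon$ before specializing, or (cleaner, and closer to Negahban et al.) use plain concavity of $\log\det$ to get $D(\Delta)\geq {\rm tr}\{(S-\Omega_*^{-1})\Delta\}+\lambda\left(|\left[A\Delta B\right]_{\mathcal{G}^c}|_1-|\left[A\Delta B\right]_{\mathcal{G}}|_1\right)$ for all feasible $\Delta$, which yields the same conclusion with no restriction on $\|\Delta\|_F$.
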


Lemma \ref{lemma3} follows from the proof of Lemma 2 from \citet{lam2009sparsistency}, Assumption 2, and Lemma A.3 of \citet{bickel2008regularized}.

  \begin{lemma}\label{lemma3} Under Assumptions 1--3, there exist constants $C_1$ and $C_2$ such that 
  $$ {\rm pr}(\|B^{+}S_n A^{+} - B^{+}\Omega_*^{-1}A^{+}\|_{\infty} \geq \nu) \leq C_1 p_n^2 {\rm exp}(-C_2 n \nu^2),$$
  for $|\nu| \leq \delta$ where $C_1, C_2,$ and $\delta$ do not depend on $n$. 
  \end{lemma}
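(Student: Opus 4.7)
The plan is to reduce the matrix max-norm to a family of scalar tail probabilities on centered averages and control each one via a Bernstein-type inequality. Under Assumption \ref{A2}, both $A$ and $B$ have rank $p$, so $A^{+}$ and $B^{+}$ are genuine one-sided inverses with $\|A^{+}\| = 1/\sigma_{p}(A) \le 1/k_2$ and $\|B^{+}\| \le 1/k_2$. Writing $u_i^{\T}$ for the $i$th row of $B^{+}$ and $v_j$ for the $j$th column of $A^{+}$, this yields $\|u_i\|_2 \le 1/k_2$ and $\|v_j\|_2 \le 1/k_2$ for every $i,j$. I would then rewrite
$$[B^{+}(S-\Omega_*^{-1})A^{+}]_{ij} = u_i^{\T}(S-\Omega_*^{-1})v_j = \frac{1}{n}\sum_{m=1}^{n}\left\{(u_i^{\T}X_m)(v_j^{\T}X_m) - u_i^{\T}\Omega_*^{-1}v_j\right\},$$
so that every entry is a centered average of $n$ i.i.d.\ scalar random variables.

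The core technical step is to show that each summand is sub-exponential with moment generating function bounded uniformly in a neighborhood of the origin whose width does not depend on $p$. The two key inputs are Assumption \ref{A1s}, which forces $u_i^{\T}\Omega_*^{-1}u_i \le k_1\|u_i\|_2^{2} \le k_1/k_2^{2}$ and the analogous bound for $v_j$, and Assumption \ref{A3}, which supplies sub-Gaussian control on each coordinate of $X_m$. These combine, in the manner of the proof of Lemma 2 of \citet{lam2009sparsistency}, to produce a bound of the form
$$E\bigl\{\exp\bigl(t[(u_i^{\T}X_m)(v_j^{\T}X_m) - u_i^{\T}\Omega_*^{-1}v_j]\bigr)\bigr\} \le \exp(C t^{2})$$
for all $|t|$ below a constant depending only on $k_1,\ldots,k_4$.

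Once the sub-exponential control is in hand, a standard Bernstein inequality yields, for each index pair $(i,j)$,
$$P\!\left(|u_i^{\T}(S-\Omega_*^{-1})v_j| \ge \nu\right) \le 2\exp(-C_2 n\nu^{2}), \quad |\nu| \le \delta,$$
with $C_2$ and $\delta$ depending only on $k_1,\ldots,k_4$. A union bound over the at most $p^{2}$ index pairs (absorbing the ratio between the number of entries and $p^{2}$ into the prefactor) then delivers
$$P\!\left(\|B^{+}(S-\Omega_*^{-1})A^{+}\|_{\infty} \ge \nu\right) \le C_1 p^{2}\exp(-C_2 n\nu^{2}),$$
which mirrors the $p^{2}$ union bound of \citet{bickel2008regularized}, Lemma A.3, but applied to the projected variables $u_i^{\T}X_m$ and $v_j^{\T}X_m$ rather than to the raw coordinates of $X_m$.

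The main obstacle is the second step: passing from coordinate-wise sub-Gaussianity in Assumption \ref{A3} to a dimension-free sub-exponential bound on the product $(u_i^{\T}X_m)(v_j^{\T}X_m)$. The essential role of Assumption \ref{A1s} is to keep the projected variance $u_i^{\T}\Omega_*^{-1}u_i$ bounded independently of $p$ so that the Bernstein constant $C_2$ does not degrade as $p$ grows; Assumption \ref{A2} provides the uniform $L_2$ control of $u_i$ and $v_j$ that is needed to invoke Assumption \ref{A1s}. Everything else -- the i.i.d.\ decomposition, the Bernstein bound, and the $p^{2}$ union bound -- is a routine adaptation of the argument used by \citet{bickel2008regularized} and \citet{lam2009sparsistency} for the ordinary sample-covariance max-norm.
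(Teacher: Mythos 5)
Your overall route --- write each entry of $B^{+}(S-\Omega_*^{-1})A^{+}$ as a centered i.i.d.\ average of products of projections, establish a sub-exponential moment generating function bound, apply a Bernstein inequality entrywise, and finish with a union bound --- is exactly the argument the paper points to when it cites the proof of Lemma 2 of \citet{lam2009sparsistency}, Assumption \ref{A2}, and Lemma A.3 of \citet{bickel2008regularized}. The reduction to scalar averages, the use of Assumption \ref{A2} to get $\|u_i\|_2 \le 1/k_2$ and $\|v_j\|_2 \le 1/k_2$, and the use of Assumption \ref{A1s} to bound the projected variances are all correct and are the intended ingredients.

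However, the step you yourself single out as ``the main obstacle'' is asserted rather than proved, and as written it does not follow. Assumption \ref{A3} is a \emph{coordinate-wise} sub-Gaussian condition, and the proofs of \citet{lam2009sparsistency} and \citet{bickel2008regularized} only ever need tail control on products of raw coordinates $X_{mi}X_{mj}$, for which coordinate-wise sub-Gaussianity suffices. To run the same argument on $(u_i^{\T}X_m)(v_j^{\T}X_m)$ you need every unit projection $u^{\T}X_m$ to be sub-Gaussian with a constant that does not grow with $p$, and coordinate-wise sub-Gaussianity does not deliver this: the generic bound on the sub-Gaussian norm of $\sum_j u_j X_{mj}$ for dependent coordinates involves $\|u\|_1$, which can be as large as $\sqrt{p}\,\|u\|_2$, so your constants $C$, $C_2$, and $\delta$ would degrade with dimension. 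Boundedness of $u^{\T}\Omega_*^{-1}u$ (Assumption \ref{A1s}) controls the variance of the projection, not its tails. The gap closes if one strengthens Assumption \ref{A3} to joint sub-Gaussianity of the vector $X_1$ (which holds in the multivariate normal case the paper emphasizes, since then $u_i^{\T}X_m$ is exactly $N(0,\,u_i^{\T}\Omega_*^{-1}u_i)$ with variance at most $k_1/k_2^2$), but you should state that this is what you are using rather than attributing it to the coordinate-wise condition. A secondary, smaller point: the matrix $B^{+}(S-\Omega_*^{-1})A^{+}$ has $ab$ entries and Assumption \ref{A2} forces $a,b \ge p$, so the union bound is over at least $p^2$ pairs; your parenthetical about absorbing the ratio $ab/p^2$ into $C_1$ requires $ab = O(p^2)$, which should be made explicit (the paper's own $p^2$ prefactor carries the same implicit assumption).
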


\begin{proof}[Proof of Theorem 1]
  Set $\epsilon = \kappa^{-1}\eta_1 \xi(p_n, \mathcal{G})K_1  (n^{-1} \log p_n)^{1/2}$ and  $\lambda_n = K_1( n^{-1} \log p_n)^{1/2}$ with $\eta_1 > 12$.  Applying Lemma 1 and Lemma 3, there exist constants $C_1$ and $C_2$ such that for sufficiently large $n,$ 
  ${\rm pr}\{\|\hat{\Omega} - \Omega_*\|_F \leq \kappa^{-1}\eta_1 \xi(p_n, \mathcal{G}) K_1 (n^{-1}\log p_n)^{1/2}\}  \geq {\rm pr}\{ \|B^{+}(S_n - \Omega_*^{-1})A^{+}\|_{\infty} \leq K_1 (n^{-1}\log p_n)^{1/2}/2 \} \geq 1 - C_1p_n^{2 - C_2 K_1^2},$
  which establishes $(i)$ because $1 - C_1p_n^{2 - C_2 K_1^2} \to 1$ as $K_1 \to \infty$. To establish $(ii)$, 
  \begin{align}
  |A(\hat{\Omega} - \Omega_*)B|_1 & = | [A(\hat{\Omega} - \Omega_*)B]_{\mathcal{G} }|_1 + |[A (\hat{\Omega} - \Omega_*)B]_{\mathcal{G}^c} |_1 \notag \\
  & \leq 4  | [A(\hat{\Omega} - \Omega_*)B]_{\mathcal{G} }|_1 \label{Thm1Prf2}\\
  &  \leq 4 \xi(p_n, \mathcal{G}) \|\hat{\Omega} - \Omega_*\|_F, \label{Thm1Prf2_1}
  \end{align}
  where \eqref{Thm1Prf2} follows from Lemma \ref{lemma2} and \eqref{Thm1Prf2_1} follows from the definition of $\xi(p_n, \mathcal{G})$. 
  \end{proof}

  \subsection{Proof of Theorem 2}

  \begin{lemma}\label{lemma4} Let $M_a$ and $M_b$ be positive constants; and let $a_n$ and $b_n$ be positive sequences. Define $\mathcal{D}_n(M_a, M_b)$ as the event that
  $|(\hat{A}_n - A)A^{+}|_1 \leq M_a a_n$ and $|B^{+}(\hat{B}_n - B)|_1 \leq M_b b_n$. Then   
  \begin{align*} |\hat{A}_n(\Omega_* + \Delta)\hat{B}_n|_1 - |\hat{A}_n\Omega_*\hat{B}_n|_1 
  \geq |A(\Delta + \Omega_*)B|_1 & - |[A \Omega_* B]_\mathcal{G}|_1 \\ & -  \tilde{M}_n \left( |A\Delta B|_1 
  + 2| [A\Omega_* B]_\mathcal{G}|_1\right)
  \end{align*}
  on $\mathcal{D}_n(M_a, M_b)$, where $\tilde{M}_{n} = M_a a_n + M_b b_n + M_a M_b a_n b_n.$
  \end{lemma}

 \begin{proof} Let
  $|\hat{A}_n(\Omega_* + \Delta)\hat{B}_n|_1 - |\hat{A}_n\Omega_*\hat{B}_n|_1 \equiv V_1 - V_2.$ 
  First,
  \begin{align}
  V_1  & =  |\hat{A}_n (\Omega_* + \Delta)\hat{B}_n + A(\Omega_* + \Delta)B - A(\Omega_* + \Delta)B|_1 \notag \\
   & \geq  |A(\Omega_* + \Delta)B |_1 - |A(\Omega_* + \Delta)B - \hat{A}_n(\Omega_* + \Delta)\hat{B}_n|_1, \label{V1}
  \end{align}
by the triangle inequality. Also,
  \begin{equation}
  V_2 = |\hat{A}_n\Omega_*\hat{B}_n - A\Omega_* B + A\Omega_* B|_1 
  \leq |\hat{A}_n\Omega_*\hat{B}_n - A\Omega_*B|_1 + |[A\Omega_* B]_\mathcal{G}|_1, \label{V2}
  \end{equation}
  so that from \eqref{V1} and \eqref{V2},  
  \begin{align}
  V_1 - V_2  \geq & |A(\Omega_* + \Delta)B |_1 - |[A\Omega_* B]_\mathcal{G}|_1 \notag \\
  &    - |A(\Omega_* + \Delta)B - \hat{A}_n(\Omega_* + \Delta)\hat{B}_n|_1 
  - |\hat{A}_n\Omega_*\hat{B}_n - A\Omega_*B |_1. \label{V1-V2}
  \intertext{Let $V_3 = -|A(\Omega_* + \Delta)B - \hat{A}_n(\Omega_* + \Delta)\hat{B}_n|_1 
  - |\hat{A}_n\Omega_*\hat{B}_n - A\Omega_*B|_1$. By a triangle inequality on the first term of $V_3$, 
    }
   V_3 &\geq - 2| \hat{A}_n \Omega_* \hat{B}_n - A \Omega_* B|_1 
  - | \hat{A}_n \Delta \hat{B}_n - A \Delta B|_1. \label{V3}
  \end{align}
 To bound \eqref{V3}, we need to bound functions of the 
    form $|AX B - \hat{A}_n X \hat{B}_n|_1; $
    \begin{align}
    |AXB  - \hat{A}_nX \hat{B}_n|_1 =& | (A - \hat{A}_n)X B + A X (B - \hat{B}_n) 
    +(A - \hat{A}_n)X (\hat{B}_n - B)|_1 \notag \\
     \leq&  |(A - \hat{A}_n)X B|_1 + |A X (B - \hat{B}_n)|_1 
     + |(A - \hat{A}_n)X (\hat{B}_n - B)|_1.\label{triange1}\\
     =& |(A - \hat{A}_n)A^{+} AX B|_1 + |A X B B^{+}(B - \hat{B}_n)|_1 \notag \\
     & \hspace{104pt}+ |(A - \hat{A}_n)A^{+}AX B B^{+} (\hat{B}_n - B)|_1, \label{A3step}\\
     \leq & |A X B|_1 \left\{|(A - \hat{A}_n)A^{+}|_1 + |B^{+}(B - \hat{B}_n)|_1 \right. \notag \\
     & \hspace{123pt}\left.+ |(A - \hat{A}_n)A^{+}|_1  |B^{+}(B - \hat{B}_n)|_1\right\} \label{submulti}\\
     \leq&  |A X B|_1 \left(M_a a_n  + M_b b_n + M_a M_b a_n b_n \right) , \label{last step}
     \end{align}
      where  \eqref{triange1} follows from the triangle inequality; 
      \eqref{A3step} follows from Assumption 2 and the definition of $A^{+}$ and $B^{+}$; 
     \eqref{submulti} follows from the sub-multiplicative property 
     of the $| \cdot|_1$ norm, and  \eqref{last step} occurs on $\mathcal{D}_n(M_a, M_b).$
     Applying \eqref{last step} to both terms in \eqref{V3} gives
    $$ V_3 \geq - 2| \hat{A}_n \Omega_* \hat{B}_n - A \Omega_* B|_1 
    - | \hat{A}_n \Delta \hat{B}_n - A \Delta B|_1 \geq -\tilde{M_n} 
    \left( 2|[A \Omega_* B]_\mathcal{G}|_1 +  |A\Delta B|_1\right). $$ 
   Plugging this bound into \eqref{V1-V2} gives the result. 
   \end{proof}

  \begin{lemma}\label{lemma5} Under Assumptions 1--3 and 5,
  if $n$ is sufficiently large and $\lambda_n \leq \epsilon \kappa (T_{\epsilon} \eta_2)^{-1}$ 
  for some $\eta_2 > 8$, where  
  $$T_\epsilon = \left\{ \left(\frac{3}{2} + \tilde{M}_n \right)\xi(p_n, \mathcal{G}) 
  + 2\tilde{M}_n \frac{|[A \Omega_* B]_\mathcal{G} |_1}{\epsilon} \right\},$$ 
  then for all positive and sufficiently small $\epsilon$, $\| B^{+}(S_n - \Omega_*^{-1})A^{+}\|_{\infty} 
  \leq \lambda_n/2$, implies $\|\tilde{\Omega} - \Omega_*\|_F \leq \epsilon$.
  \end{lemma}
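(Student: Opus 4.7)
The plan is to mimic the convex-minimizer strategy of Lemma~\ref{lemma1}, applied now to the objective $\tilde{f}(\Omega) = {\rm tr}(S\Omega) - \log{\rm det}(\Omega) + \lambda|\hat{A}_n \Omega \hat{B}_n|_1$ that defines $\tilde{\Omega}$. By convexity of $\tilde{f}$ and the fact that $\tilde{\Omega}$ is its minimizer, it suffices to show $D(\Delta) := \tilde{f}(\Omega_* + \Delta) - \tilde{f}(\Omega_*) > 0$ on the sphere $\{\Delta \in \mathbb{S}^p : \|\Delta\|_F = \epsilon\}$, which pins the minimizer inside the $\epsilon$-ball.

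First I would build a lower bound for $D(\Delta)$ by handling each summand separately. The log-det piece is handled exactly as in Lemma~\ref{lemma1}, giving $-\log{\rm det}(\Omega_* + \Delta) + \log{\rm det}(\Omega_*) \geq -{\rm tr}(\Omega_*^{-1}\Delta) + (\kappa/8)\|\Delta\|_F^2$ for $\epsilon$ small enough. The linear term ${\rm tr}\{(S - \Omega_*^{-1})\Delta\}$ is controlled by inserting $A^{+}A$ on the left and $BB^{+}$ on the right, applying the $|\cdot|_1/\|\cdot\|_\infty$ duality, and using the hypothesis $\|B^{+}(S - \Omega_*^{-1})A^{+}\|_\infty \leq \lambda/2$ to obtain ${\rm tr}\{(S - \Omega_*^{-1})\Delta\} \geq -(\lambda/2)|A\Delta B|_1$; note this bound is expressed in terms of the true $A, B$, not $\hat{A}_n, \hat{B}_n$. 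The key new ingredient is Lemma~\ref{lemma4}, which I would use to replace the $\hat{A}_n, \hat{B}_n$ penalty difference with the analogous quantity involving $A, B$ plus an error of order $\tilde{M}_n$.

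Combining these three ingredients with the reverse triangle inequality $|A(\Omega_* + \Delta)B|_1 - |[A\Omega_* B]_\mathcal{G}|_1 \geq |[A\Delta B]_{\mathcal{G}^c}|_1 - |[A\Delta B]_\mathcal{G}|_1$ and the split $|A\Delta B|_1 = |[A\Delta B]_\mathcal{G}|_1 + |[A\Delta B]_{\mathcal{G}^c}|_1$, and then collecting coefficients of the three summands $|[A\Delta B]_\mathcal{G}|_1$, $|[A\Delta B]_{\mathcal{G}^c}|_1$, and $|[A\Omega_* B]_\mathcal{G}|_1$, I expect to reach
$$
D(\Delta) \geq \frac{\kappa}{8}\|\Delta\|_F^2 + \lambda\left(\tfrac{1}{2} - \tilde{M}_n\right)|[A\Delta B]_{\mathcal{G}^c}|_1 - \lambda\left(\tfrac{3}{2} + \tilde{M}_n\right)|[A\Delta B]_\mathcal{G}|_1 - 2\lambda\tilde{M}_n|[A\Omega_* B]_\mathcal{G}|_1.
$$
Since $\tilde{M}_n = o(1)$, eventually $\tilde{M}_n \leq 1/2$, so the $|[A\Delta B]_{\mathcal{G}^c}|_1$ coefficient is nonnegative and the term can be dropped without any cone condition on $\Delta$.

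To finish, I would apply the compatibility bound $|[A\Delta B]_\mathcal{G}|_1 \leq \xi(p, \mathcal{G})\|\Delta\|_F$ from the definition of $\xi$, substitute $\|\Delta\|_F = \epsilon$, and factor out $\epsilon^2$ to obtain a lower bound of the form $\epsilon^2\{\kappa/8 - (\lambda/\epsilon)\,Q_\epsilon\}$, where $Q_\epsilon$ matches the quantity in the lemma statement. The hypothesis $\lambda \leq \epsilon\kappa/(Q_\epsilon \tau)$ with $\tau > 8$ then reduces the bracket to at least $\kappa(1/8 - 1/\tau) > 0$, delivering the conclusion. The main obstacle I anticipate is the sign bookkeeping when combining the $\tilde{M}_n$-correction from Lemma~\ref{lemma4} with the $-\lambda/2$ trace bound: I must ensure the coefficient of $|[A\Delta B]_{\mathcal{G}^c}|_1$ ends up nonnegative (so no additional restriction on $\Delta$ is needed), and that the residual $|[A\Omega_* B]_\mathcal{G}|_1$ term is absorbed into $Q_\epsilon$ at the rate $2\tilde{M}_n|[A\Omega_* B]_\mathcal{G}|_1/\epsilon$, which is exactly the second summand of $Q_\epsilon$.
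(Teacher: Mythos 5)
Your proposal follows essentially the same route as the paper's own proof: bound the penalty difference via Lemma~\ref{lemma4}, reuse the log-determinant and trace bounds from Lemma~\ref{lemma1}, arrive at the identical intermediate inequality $\tilde{D}(\Delta) \geq \tfrac{1}{8}\kappa\|\Delta\|_F^2 - (\tfrac{3}{2}+\tilde{M}_n)\lambda|[A\Delta B]_{\mathcal{G}}|_1 + (\tfrac{1}{2}-\tilde{M}_n)\lambda|[A\Delta B]_{\mathcal{G}^c}|_1 - 2\tilde{M}_n\lambda|[A\Omega_*B]_{\mathcal{G}}|_1$, drop the $\mathcal{G}^c$ term for large $n$, and close with the compatibility constant and the condition $\tau>8$. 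This matches the paper's argument step for step, so the proposal is correct in the same sense the paper's proof is.
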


  \begin{proof} Let $\tilde{f}$ be the objective function from (9). Define 
  $\tilde{D}(\Delta) = \tilde{f}(\Omega_* + \Delta) - \tilde{f}(\Omega_*)$ so that 
  $$ \tilde{D}(\Delta)  = {\rm tr}(S_n\Delta) + \log {\rm det}(\Omega_*) 
  - \log{\rm det}(\Omega_* + \Delta) 
  + \lambda_n \left\{ |\hat{A}_n(\Omega_* + \Delta)\hat{B}_n|_1 
  - |\hat{A}_n\Omega_*\hat{B}_n|_1\right\}.$$ 
    As in the proof of Lemma \ref{lemma1}, 
  we want to show 
 ${\rm inf}\{\tilde{f}(\Omega_* + \Delta): \Delta \in B_\epsilon \} > \tilde{f}(\Omega_*)$. By Assumption 5, for sufficiently large $n$, the bound in Lemma 4 holds with probability arbitrarily close to one for sufficiently large constants $M_a$ and $M_b$. Thus, let $\tilde{M}_n =  a_n M_a + b_n M_b + a_n b_n M_a M_b$ with $M_a$ and $M_b$ sufficiently large. Then, using the bound in Lemma 4 and applying the same arguments as in the proof of Lemma \ref{lemma1} to obtain \eqref{D(Delta)forLemma5}, 
  \begin{align}
  \tilde{D}(\Delta) \geq& \frac{1}{8}\kappa \|\Delta\|_F^2 
  -  \frac{\lambda_n}{2} \left(|\left[A \Delta B\right]_\mathcal{G} |_1 
  + |\left[A \Delta B\right]_{\mathcal{G}^c}|_1 \right)
   + \lambda_n \left( |\left[A \Delta B\right]_{\mathcal{G}^c} |_1 
   - |\left[A \Delta B\right]_{\mathcal{G}} |_1\right) \notag \\
  &  - \tilde{M_n} \lambda_n \left( |A \Delta B|_1 + 2 |[A \Omega_* B]_\mathcal{G}|_1\right) \notag \\
   = &  \frac{1}{8}\kappa \|\Delta\|_F^2 -  \frac{3\lambda_n}{2} |\left[A \Delta B\right]_\mathcal{G} |_1  
   + \frac{\lambda_n}{2} |[A \Delta B]_{\mathcal{G}^c}|_1 - \tilde{M}_n \lambda_n \left( |A \Delta B|_1  
   + 2|[A \Omega_* B]_\mathcal{G}|_1\right) \notag \\
   = & \frac{1}{8}\kappa \|\Delta\|_F^2 -  \frac{3\lambda_n}{2}  |\left[A \Delta B\right]_\mathcal{G} |_1  
   + \frac{\lambda_n}{2} |[A \Delta B]_{\mathcal{G}^c}|_1 \notag \\
   &  - \tilde{M}_n \lambda_n \left( |[A \Delta B]_\mathcal{G}|_1 
   + |[A \Delta B]_{\mathcal{G}^c}|_1  + 2|[A \Omega_* B]_\mathcal{G}|_1\right) \notag \\
   = & \frac{1}{8}\kappa \|\Delta\|_F^2 -  \left(\frac{3}{2} 
   + \tilde{M}_n\right)\lambda_n |\left[A \Delta B\right]_\mathcal{G} |_1  \notag \\ 
   & 
   + \left(\frac{1}{2} - \tilde{M}_n\right) \lambda_n |[A \Delta B]_{\mathcal{G}^c}|_1 
   - 2 \tilde{M}_n \lambda_n |[A \Omega_* B]_\mathcal{G}|_1 \label{lastline_lemma5}
   \end{align}
  and because $\tilde{M}_n = o(1)$ by Assumption 5, for sufficiently large $n$, \eqref{lastline_lemma5} implies
  \begin{align}
  \tilde{D}(\Delta) \geq&  \frac{1}{8}\kappa \|\Delta\|_F^2 
  - \left(\frac{3}{2} + \tilde{M}_n\right)\lambda_n |\left[A \Delta B\right]_\mathcal{G} |_1  
  - 2 \tilde{M}_n \lambda_n |[A \Omega_* B]_\mathcal{G}|_1\notag \\
   \geq &\|\Delta\|_F^2 \left\{ \frac{1}{8}\kappa  
  - \left(\frac{3}{2} + \tilde{M}_n \right)\frac{\lambda_n}{\|\Delta\|_F}\xi(p_n, \mathcal{G}) 
  - 2\tilde{M}_n \lambda_n \frac{|[A \Omega_* B]_\mathcal{G} |_1}{\|\Delta\|_F^2}  \right\}\notag \\
   = & \|\Delta\|_F^2 \left[ \frac{1}{8}\kappa  
  -\frac{\lambda_n}{\|\Delta\|_F} \left\{ \left(\frac{3}{2} + \tilde{M}_n \right)
  \xi(p_n, \mathcal{G}) + 2\tilde{M}_n \frac{|[A \Omega_* B]_\mathcal{G} |_1}{\|\Delta\|_F} \right\} \right].
  \label{last_line_mid}
  \end{align}
  Since $\|\Delta\|_F = \epsilon$ when $\Delta \in B_\epsilon$, if $\lambda_n \leq  \epsilon \kappa( T_\epsilon \eta_2)^{-1}$ 
  for some $\eta_2 > 8$, where 
  $$ T_\epsilon = \left\{ \left(\frac{3}{2} 
  + \tilde{M}_n \right)\xi(p_n, \mathcal{G}) 
 + 2\tilde{M}_n \frac{|[A \Omega_* B]_\mathcal{G} |_1}{\epsilon} \right\},$$
  the inequality from \eqref{last_line_mid} implies 
  $$\tilde{D}(\Delta)  \geq \epsilon^2 \left[ \frac{1}{8}\kappa  
  -\frac{\lambda_n}{\epsilon} \left\{ \left(\frac{3}{2} 
  + \tilde{M}_n \right)\xi(p_n, \mathcal{G}) 
 + 2\tilde{M}_n \frac{|[A \Omega_* B]_\mathcal{G} |_1}{\epsilon} \right\} \right]
  \geq \epsilon^2  \left( \frac{1}{8}\kappa  - \frac{1}{\eta_2}\kappa \right) > 0, $$
  which establishes the desired result. 
  \end{proof}

  \begin{lemma}\label{lemma6} Under the conditions of Lemma 5, 
  $\tilde{\Delta} = \tilde{\Omega} - \Omega_*$ 
  belongs to the set $$\left\{ \Delta \in  \mathbb{S}^{p_n}: |[A \Delta B]_{\mathcal{G}^c}|_1 \leq 
  \frac{(3 + 2\tilde{M}_n) |[A \Delta B]_\mathcal{G}|_1  + 
  4 \tilde{M}_n |[A \Omega_* B]_\mathcal{G}|_1}{1 - 2 \tilde{M}_n}\right\}.$$
  \end{lemma}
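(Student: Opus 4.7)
The plan is to mirror the proof of Lemma~\ref{lemma2}, but to start from the sharper intermediate bound on $\tilde{D}(\Delta)$ derived in the proof of Lemma~\ref{lemma5}, rather than from the version already simplified via the definition of $\xi(p,\mathcal{G})$. Specifically, I would extract the inequality
\[
\tilde{D}(\Delta) \geq \frac{1}{8}\kappa \|\Delta\|_F^2 - \left(\frac{3}{2} + \tilde{M}_n\right)\lambda |[A\Delta B]_\mathcal{G}|_1 + \left(\frac{1}{2} - \tilde{M}_n\right)\lambda |[A\Delta B]_{\mathcal{G}^c}|_1 - 2\tilde{M}_n \lambda |[A\Omega_*B]_\mathcal{G}|_1,
\]
which appears just before the last simplification step in the proof of Lemma~\ref{lemma5}. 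This inequality is valid on the event $\|B^{+}(S-\Omega_*^{-1})A^{+}\|_\infty \leq \lambda/2$ along with the high-probability event that appears in Lemma~\ref{lemma4}; both are part of the hypotheses of Lemma~\ref{lemma5}.

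Next, I would apply this inequality to $\hat{\Delta} = \tilde{\Omega} - \Omega_*$. Since $\tilde{\Omega}$ is a minimizer of the objective $\tilde{f}$ defining \eqref{estimator_AB_unknown} and $\Omega_* \in \mathbb{S}^p_+$ is feasible, we have $\tilde{D}(\hat{\Delta}) = \tilde{f}(\tilde{\Omega}) - \tilde{f}(\Omega_*) \leq 0$. Dropping the nonnegative term $\frac{1}{8}\kappa \|\hat{\Delta}\|_F^2$ (which only strengthens the resulting upper bound) and rearranging,
\[
\left(\frac{1}{2} - \tilde{M}_n\right)\lambda |[A\hat{\Delta} B]_{\mathcal{G}^c}|_1 \leq \left(\frac{3}{2} + \tilde{M}_n\right)\lambda |[A\hat{\Delta} B]_\mathcal{G}|_1 + 2\tilde{M}_n \lambda |[A\Omega_*B]_\mathcal{G}|_1.
\]

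Finally, since $\tilde{M}_n = o(1)$, for sufficiently large $n$ the coefficient $\frac{1}{2} - \tilde{M}_n$ is strictly positive; dividing through by $\lambda(\frac{1}{2} - \tilde{M}_n)$ and clearing the factor of two yields
\[
|[A\hat{\Delta} B]_{\mathcal{G}^c}|_1 \leq \frac{(3 + 2\tilde{M}_n)|[A\hat{\Delta} B]_\mathcal{G}|_1 + 4\tilde{M}_n |[A\Omega_*B]_\mathcal{G}|_1}{1 - 2\tilde{M}_n},
\]
which is exactly the set-membership claim. There is no real obstacle here beyond bookkeeping: the substantive work (controlling the extra terms arising from $\hat{A}_n$ and $\hat{B}_n$ via $\tilde{M}_n$) was already carried out in Lemma~\ref{lemma4} and invoked inside the proof of Lemma~\ref{lemma5}, so Lemma~\ref{lemma6} is essentially the optimality-based rearrangement of that pre-existing bound, in direct analogy with how Lemma~\ref{lemma2} follows from Lemma~\ref{lemma1}.
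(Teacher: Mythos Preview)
Your proposal is correct and follows essentially the same route as the paper: both arguments invoke the intermediate lower bound \eqref{lastline_lemma5} from the proof of Lemma~\ref{lemma5}, use the optimality of $\tilde{\Omega}$ to get $\tilde{D}(\hat{\Delta})\le 0$, discard the nonnegative curvature term, and rearrange. The paper phrases this via the Negahban--Wainwright Lemma~1 framework and writes the penalty contribution in a slightly different but algebraically equivalent grouping, but the substance is identical to what you wrote.
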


\begin{proof}
  Using the same arguments as in the proof of Lemma 1 from 
  \citet{negahban2009unified}, and from \eqref{lastline_lemma5}, we have  
  \begin{align*}
  0 \geq \tilde{D}(\tilde{\Delta})  \geq & - \frac{\lambda_n}{2}(1 + 2\tilde{M}_n)
  \left( |[A \tilde{\Delta} B]_\mathcal{G}|_1 + |[A \tilde{\Delta} B]_{\mathcal{G}^c}|_1\right) 
  + \lambda_n \left( |[A \tilde{\Delta} B]_{\mathcal{G}^c}|_1 - |[A \tilde{\Delta} B]_{\mathcal{G}}|_1\right) \\
  & - 2\lambda_n \tilde{M}_n  |[A \Omega_* B]_\mathcal{G}|_1 \\
   = & -\frac{\lambda_n}{2}\left\{(3 + 2 \tilde{M}_n) |[A \tilde{\Delta} B]_\mathcal{G}|_1 
  -  (1 - 2 \tilde{M}_n) |[A \tilde{\Delta} B]_{\mathcal{G}^c}|_1 +  4\tilde{M}_n  |[A \Omega_* B]_\mathcal{G}|_1\right\} 
  \end{align*}
  so that 
   $$ |[A \tilde{\Delta} B]_{\mathcal{G}^c}|_1 \leq \frac{(3 + 2 \tilde{M}_n) |[A \tilde{\Delta} B]_\mathcal{G}|_1  
   + 4 \tilde{M}_n |[A \Omega_* B]_\mathcal{G}|_1}{1 - 2 \tilde{M}_n},$$
   which is the desired inequality.
\end{proof}

  \begin{proof}[Proof of Theorem 2]
  Set $\lambda_n = K_2 (n^{-1} \log p_n)^{1/2}$ and $\epsilon = \lambda_n T_{\epsilon}\eta_2\kappa^{-1}$ for some $\eta_2 > 8$. We can simplify the expression for $\epsilon$ by solving 
  $$\epsilon \kappa \eta^{-1}_2 (K_2^2   n^{-1}\log p_n)^{-1/2} = 
  \left\{ \left(\frac{3}{2} + \tilde{M}_n \right)\xi(p_n, \mathcal{G}) 
  + 2\tilde{M}_n \frac{|[A \Omega_* B]_\mathcal{G} |_1}{\epsilon} \right\} ,$$
   or equivalently,
  \begin{align}
  \epsilon^2 \kappa \eta^{-1}_2 (K_2^2 n^{-1}\log p_n)^{-1/2}  
  - \epsilon \left( \frac{3}{2} + \tilde{M}_n \right) \xi(p_n, \mathcal{G}) 
  - 2 \tilde{M}_n |[A \Omega_* B]_\mathcal{G}|_1 = 0.  \label{quadratic_epsilon}
  \end{align}
  Using the quadratic formula to solve \eqref{quadratic_epsilon} for $\epsilon$,
  \begin{align} \epsilon = & \frac{K_2\eta_2}{2\kappa } \left(\frac{\log p_n}{n}\right)^{1/2}
  \left[ \left( \frac{3}{2} + \tilde{M}_n \right) \xi(p, \mathcal{G}) \right. \notag \\
   & \quad \quad \left. +  \left\{  \left( \frac{3}{2} + \tilde{M}_n \right)^2 \xi^2(p_n, \mathcal{G}) 
  + \frac{8 \tilde{M}_n \kappa }{K_2 \eta_2}\left(\frac{n}{\log p_n}\right)^{1/2}
  |[A \Omega_* B]_\mathcal{G}|_1 \right\}^{1/2}\right]. \label{epsilon}
  \end{align}
  To further simplify the result, we find an $\tilde{\epsilon}$ such that $\epsilon \leq \tilde{\epsilon}$. Then
  $\|\tilde{\Omega} - \Omega_*\|_F  \leq \epsilon$ implies $\|\tilde{\Omega} - \Omega_*\|_F  \leq \tilde{\epsilon}$, 
  so $\|B^{+}(S_n - \Omega_*)A^{+}\|_{\infty} \leq \lambda_n/2$ 
  also implies $\|\tilde{\Omega} - \Omega_*\|_F  \leq \tilde{\epsilon}$. 
  Viewing the square root in \eqref{epsilon} as the Euclidean norm of the sum 
  of the square root of its two terms, we use the triangle inequality to obtain 
  \begin{equation}
   \epsilon \leq \frac{K_2 \eta_2}{ \kappa} \left(\frac{\log p_n}{n}\right)^{1/2}
  \left[  \left( \frac{3}{2} + \tilde{M}_n \right) \xi(p_n, \mathcal{G}) 
  +   \left\{ \frac{2 \tilde{M}_n \kappa}{K_2 \eta_2}\left(\frac{n}{\log p_n}\right)^{1/2}
  |[A \Omega_* B]_\mathcal{G}|_1 \right\} ^{1/2}\right] \equiv \tilde\epsilon\notag.
  \end{equation}
  Then, applying Lemma \ref{lemma5} and Lemma 3, there exist constants $C_3$ and $C_4$ such that for sufficiently large $n,$
  \begin{align*}{\rm pr}\left(\|\tilde{\Omega}  - \Omega_* \|_F \leq \tilde{\epsilon}\right) & 
   \geq {\rm pr}\left\{ \|B^{+}S_n A^{+} - B^{+}\Omega_*^{-1}A^{+}\|_{\infty} \leq \frac{K_2}{2}  \left(\frac{\log p_n}{n}\right)^{1/2} \right\}\\ 
& \geq 1 - C_3 p_n^{2 - C_4 K_2^2}
   \end{align*}
  which establishes $(i)$ because $1 - C_3 p_n^{2 - C_4 K_2^2} \to 1$ as $K_2 \to \infty$. 
  To establish $(ii)$, we bound $|\hat{A}_n \tilde{\Omega} \hat{B}_n - A \Omega_*B|_1.$ 
    By the triangle inequality,
  \begin{align}|\hat{A}_n \tilde{\Omega} \hat{B}_n - A \Omega_*B|_1 
  & \leq |\hat{A}_n \tilde{\Omega} \hat{B}_n -  A\tilde{\Omega} B|_1 
  + | A \tilde{\Omega} B- A \Omega_*B|_1\label{corol1tri} 
  \end{align}
  and by the argument used to obtain the inequality in \eqref{last step},
  $|\hat{A}_n \tilde{\Omega} \hat{B}_n -  A\tilde{\Omega} B|_1  \leq \tilde{M}_n|A \tilde{\Omega}B|_1. $
  Using this bound on the first term in \eqref{corol1tri},
  \begin{equation}
   |\hat{A}_n \tilde{\Omega} \hat{B}_n - A \Omega_*B|_1 
   \leq \tilde{M}_n |A \tilde{\Omega} B|_1 + | A \tilde{\Omega} B- A \Omega_*B|_1 \label{corol1eqn}. 
   \end{equation}
   Then, bounding the first term in \eqref{corol1eqn}, 
   $
   |A \tilde{\Omega} B|_1   \leq |A \Omega_* B|_1 + |A \tilde{\Omega} B - A \Omega_* B |_1 $
   so that from \eqref{corol1eqn}, 
   \begin{equation} 
   |\hat{A}_n \tilde{\Omega} \hat{B}_n - A \Omega_*B|_1 
   \leq \tilde{M}_n |A \Omega_* B|_1 + (\tilde{M}_n + 1)|A \tilde{\Omega} B - A \Omega_* B |_1 .
   \label{main_them2ii}
   \end{equation}
   To bound the right term in the sum on the right hand side of \eqref{main_them2ii}, we apply Lemma 6 to $\tilde{\Delta} = \tilde{\Omega} - \Omega_*$ so that
  \begin{align}
  |A \tilde{\Omega}B - A \Omega_* B|_1 & = |[A \tilde{\Delta} B]_\mathcal{G}|_1 
  +|[A \tilde{\Delta} B]_{\mathcal{G}^c}|_1 \notag \\
  & \leq  |[A \tilde{\Delta} B]_\mathcal{G}|_1  + \frac{(3 + 2\tilde{M}_n) |[A \tilde{\Delta} B]_\mathcal{G}|_1  
  + 4 \tilde{M}_n |[A \Omega_* B]_\mathcal{G}|_1}{1 - 2 \tilde{M}_n}\notag \\
  & =  \frac{(1 - 2\tilde{M}_n)|[A \tilde{\Delta} B]_\mathcal{G}|_1  + (3 + 2\tilde{M}_n) |[A \tilde{\Delta} B]_\mathcal{G}|_1  
  + 4 \tilde{M}_n |[A \Omega_* B]_\mathcal{G}|_1}{1 - 2 \tilde{M}_n}\notag \\
  & = \frac{4 |[A \tilde{\Delta} B]_\mathcal{G}|_1 
  + 4 \tilde{M}_n |[A \Omega_* B]_\mathcal{G}|_1}{1 - 2\tilde{M}_n}\label{mn2}.
  \end{align}
  Because $\tilde{M}_n = o(1)$ by Assumption 5, there exist 
  constants $C_5$ and $C_6$ such that for some sufficiently large $n$, \eqref{mn2} implies
  $|A\tilde{\Omega}B - A\Omega_*B|_1  \leq C_5 \|\tilde{\Omega} - \Omega_* \|_F 
  \xi(p_n, \mathcal{G}) + C_6\tilde{M}_n|[A \Omega_* B]_\mathcal{G}|_1. $
  Combining this with \eqref{main_them2ii},  
  \begin{align*} 
   |\hat{A}_n \tilde{\Omega} \hat{B}_n - A \Omega_*B|_1 & \leq (\tilde{M}_n + 1) 
   \left\{ C_5 \|\tilde{\Omega} - \Omega_* \|_F \xi(p_n, \mathcal{G}) 
   + C_6\tilde{M}_n|[A \Omega_* B]_\mathcal{G}|_1 \right\} + \tilde{M}_n |[A \Omega_* B]_\mathcal{G}|_1.\\
   & =  C_5  (\tilde{M}_n + 1) \|\tilde{\Omega} - \Omega_* \|_F 
   \xi(p_n, \mathcal{G}) + C_6(\tilde{M}^2_n + \tilde{M}_n 
   + \tilde{M}_n C_6^{-1}) |[A \Omega_* B]_\mathcal{G}|_1  
   \end{align*}
  and using that $\tilde{M}_n = o(1)$ with the result from 
  Theorem 2 $(i)$ for $\|\tilde{\Omega} - \Omega_*\|_F$, we obtain the result. 
  \end{proof}

\section{Additional information for the genomic data example}
\begin{figure}[h!]
\centering
\includegraphics[width=14cm]{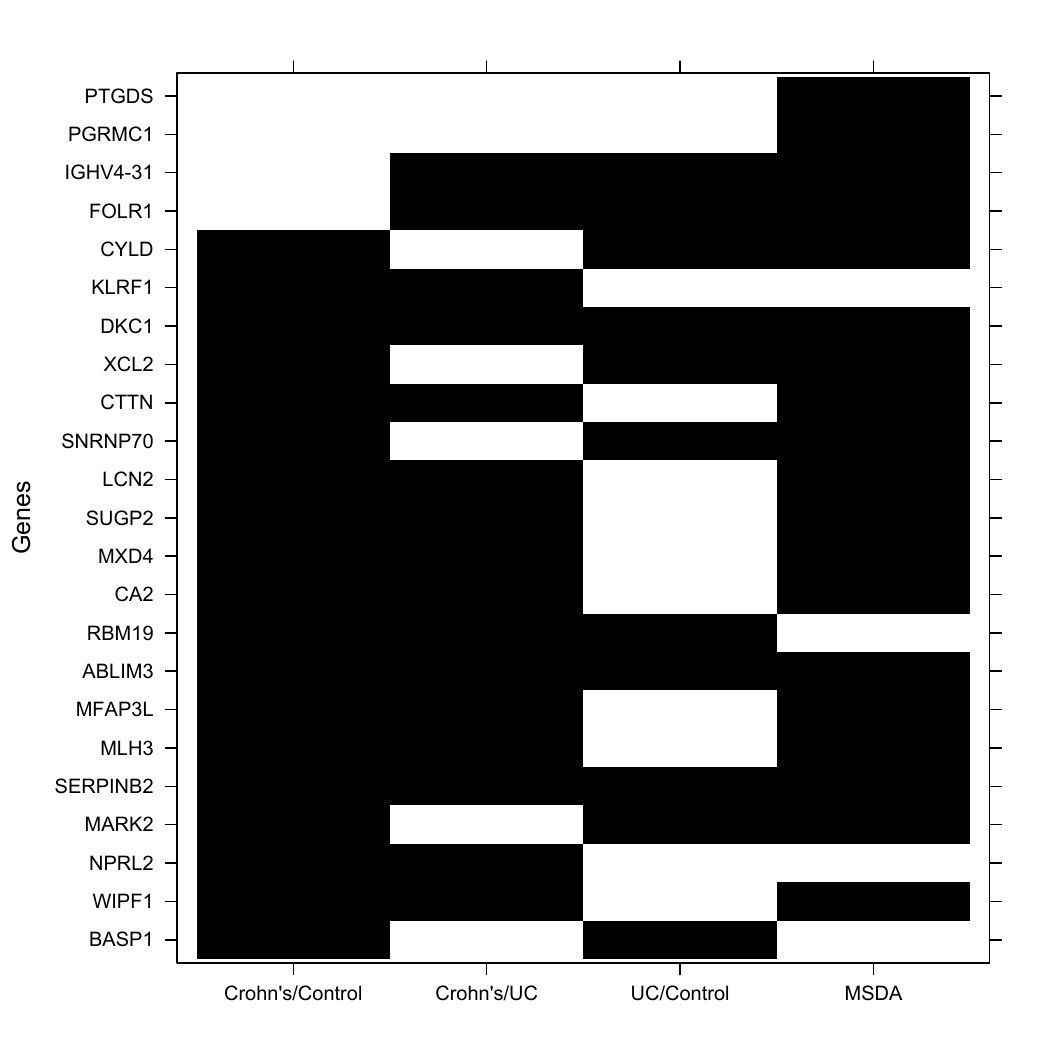}
  \caption{Genes selected as informative for discriminating between the subjects with ulcerative colitis (UC), subjects with Crohn's disease, and controls. Black indicates that the variable was estimated to be informative for discriminating between the response categories. White indicates the variable was estimated to be uninformative. The leftmost three columns are based on the fitted model using our method method, whereas the MSDA column is based on the fitted model using the method of \citet{mai2015multiclass}. }
   \label{fig:variables_selected}
\end{figure}

The data collected by \citet{burczynski2006molecular} were measured on a Affymetrix HG-U133A human GeneChip array from peripheral blood mononuclear cells from 127 patients: 59 with Crohn's disease, 26 with ulcerative colitis, and 42 controls. For our analysis, we used the log-base-2-transformed transcript measurements as predictors. These data are accessible from the Gene Expression Omnibus (https://www.ncbi.nlm.nih.gov/geo/) using accession number GDS1615. The gene names reported in Figure \ref{fig:variables_selected} are those that correspond to the transcripts according to the Gene Expression Omnibus database. 

To compare the genes selected by the method of \citet{mai2015multiclass} to the genes selected by our proposed method, we refit both models to the complete dataset after selecting tuning parameters to minimize the misclassification rate in five-fold cross-validation. In total, twenty-three genes were identified as informative for classification by at least one of the two methods. The method of \citet{mai2015multiclass}, which identifies variables that are informative for discriminating between all pairwise response category comparisons, identified nineteen informative genes. Our method identified twenty-one informative genes, only four of which were identified as informative for all pairwise response category comparisons. 

Only six genes were selected by one method but not the other: \textit{PTGDS} and \textit{PGRMC1} were selected by the method of \citet{mai2015multiclass} but not by our method, whereas 
\textit{KLRF1}, \textit{RBM19}, \textit{NPRL2}, and \textit{BASP1} were selected by our method but the method of \citet{mai2015multiclass}. Both \textit{BASP1} and \textit{RBM19} have been associated with inflammatory bowel disease, which include Crohn's disease and ulcerative colitis, in previous studies. 
For example, \textit{BASP1} was identified as significantly differentially expressed between control patients and patients with inflammatory bowel disease in the study of \citet{mesko2010peripheral}. In \citet{taleban2015ocular}, \textit{RBM19} was significantly associated with ocular extraintestinal manifestations of inflammatory bowel disease at the genome-wide level. 

The majority of genes selected by our method were estimated to be important for discriminating
between only two pairs of the three response categories pairwise comparisons. Thus, difference in specificity may partially explain why our method performed better than the method of \citet{mai2015multiclass} in terms of classification accuracy. For instance, LNC2 is a known biomarker of inflammatory bowel disease (see \citet{toyonaga2016lipocalin} and references therein). Our method estimates that LNC2 is informative for discriminating subjects with Crohn's disease from subjects with ulcerative colitis and controls, but not for discriminating between subjects with ulcerative colitis and controls. The method of \citet{mai2015multiclass} does not allow for this distinction: their method implicitly assumes that informative genes are informative for all three response category pairwise comparisons. 

Our method and the method of \citet{mai2015multiclass} agree on certain genes associated with inflammatory bowel diseases in previous studies, e.g., \textit{SERPINB2}. In the original study from \citet{burczynski2006molecular}, the authors found that \textit{SERPINB2} was the most differentially expressed gene amongst subjects with an inflammatory bowel disease and controls.

\end{document}